\documentclass[11pt]{article}
\usepackage[noadjust]{cite}
\usepackage{tikz}
\usepackage{longtable}
\usepackage{caption}
\usepackage{multirow,multicol}
\usepackage{epsf, graphicx}
\usepackage{latexsym,amsfonts,amsbsy,amssymb}
\usepackage{amsmath,amsthm}
\usepackage{cite}
\usepackage{cleveref}
\usepackage{indentfirst}
\usepackage{bm}
\usepackage [latin1]{inputenc}
\usepackage{enumitem}
\usepackage{lipsum}
\usepackage{tabularx}
\usepackage{adjustbox}
\usepackage{diagbox}
\usetikzlibrary{tikzmark, arrows.meta}
\setenumerate[1]{itemsep=0pt,partopsep=0pt,parsep=\parskip,topsep=5pt}
\setitemize[1]{itemsep=0pt,partopsep=0pt,parsep=\parskip,topsep=5pt}
\setdescription{itemsep=0pt,partopsep=0pt,parsep=\parskip,topsep=5pt}
\allowdisplaybreaks[4]
\makeatletter

\@addtoreset{equation}{section} \makeatother
\newtheorem{Theorem}{Theorem}[section]
\newtheorem{Lemma}{Lemma}[section]
\newtheorem{Corollary}{Corollary}[section]
\newtheorem{Remark}{Remark}[section]
\newtheorem{Definition}{Definition}[section]

\newtheorem{Example}{Example}[section]
\makeatletter \@addtoreset{equation}{section} \makeatother
\begin{document}
	
	\title{Function-Correcting RT Codes: A Coding Framework for Parallel Channels  \let\thefootnote\relax\footnotetext{E-Mail addresses: huiyingliu@mails.ccnu.edu.cn (Huiying Liu),  hwliu@ccnu.edu.cn (Hongwei Liu), smesnager@univ-paris8.fr (Sihem Mesnager).}}
	\author{  Huiying Liu\textsuperscript{1}, Hongwei Liu\textsuperscript{1}, Sihem Mesnager\textsuperscript{2} }
	\date{\small 1. School of Mathematics and Statistics, Central China Normal University, Wuhan, 430079, China\\
    2. LAGA, Department of Mathematics, Universities of Paris VIII and Paris XIII, CNRS, UMR 7539 and Telecom ParisTech, 91120 Palaiseau, France}
	\maketitle
	%MS+++++++++++++++++++++ Abstract +++++++++++++++++++++++++

	{\noindent\small{\bf Abstract:}
Function-correcting codes have recently emerged as an effective approach to reducing coding redundancy when the objective is to reliably recover only the prescribed function values of the transmitted information, rather than the complete message itself. As a natural generalization of the Hamming metric, the Rosenbloom--Tsfasman metric (RT-metric) has attracted considerable attention due to its relevance to communication over parallel channels. Motivated by these two research directions, we introduce in this paper \emph{function-correcting RT codes} (FCRTCs), which extend the framework of function-correcting codes from the Hamming metric to the more general RT metric.

To investigate the fundamental problem of optimal redundancy, we introduce a new class of matrix codes, called \emph{irregular RT-distance codes}, and establish a direct correspondence between the redundancy optimization problem for FCRTCs and the construction of such codes. This connection provides a unified framework for deriving redundancy bounds under the RT metric.

For arbitrary functions, we establish general upper and lower bounds on the optimal redundancy of FCRTCs. We then specialize our analysis to several important classes of functions, namely RT-weight functions, RT-weight distribution functions, and RT-locally-two-valued binary functions, for which substantially sharper bounds are obtained by exploiting their specific structural properties. Furthermore, we present explicit constructions of FCRTCs for RT-weight distribution functions and prove that these constructions attain the optimal redundancy bounds in several cases.

The proposed framework considerably extends the theory of function-correcting codes to the RT metric, provides new theoretical tools for studying reliable communication over parallel channels, and opens new perspectives for redundancy-efficient coding beyond the classical Hamming setting.
}
	
	\vspace{1ex}
	{\noindent\small{\bf Keywords:}
		Rosenbloom-Tsfasman metric; Function-correcting codes; Irregular RT-distance codes; Optimal redundancy.}

	2020 \emph{Mathematics Subject Classification}: 
    94B60, 94B65
	\section{Introduction}

In modern communication systems, information is encoded into codewords before being transmitted through a noisy channel. To ensure reliable communication, redundancy is intentionally introduced during the encoding process so that transmission errors can be detected and corrected at the receiver. In the classical framework of coding theory, the decoding procedure aims to recover the entire transmitted message, thereby enabling any desired information derived from it to be computed.

However, this requirement is unnecessarily restrictive in numerous practical scenarios. Indeed, many applications require only the value of a prescribed function of the transmitted message rather than the message itself. Recovering the complete information therefore introduces additional redundancy that is not always essential. Motivated by this observation, Lenz, Bitar, Wachter-Zeh and Yaakobi~\cite{lbwy} introduced the notion of \emph{function-correcting codes} in 2023, thereby establishing a new coding paradigm that explicitly exploits the structure of the underlying function.

In the function-correcting framework, both the sender and the receiver are assumed to know the target function in advance. Consequently, for a prescribed error-correction capability and a given function, the minimum-distance requirement is imposed only on codewords corresponding to messages having distinct function values. This represents a fundamental departure from classical error-correcting codes, where every pair of distinct codewords must satisfy the prescribed distance constraint. By relaxing the distance requirements when function values coincide, function-correcting codes can significantly reduce coding redundancy while still guaranteeing the correct recovery of the desired function value. Clearly, every classical error-correcting code is automatically a function-correcting code, since recovering the entire transmitted message immediately determines the corresponding function value.

In their pioneering work~\cite{lbwy}, Lenz \emph{et al.} established fundamental bounds on the optimal redundancy of function-correcting codes and presented explicit constructions for Hamming-weight and Hamming-weight-distribution functions. Since then, this research direction has developed rapidly and has been extended to a variety of channel models and distance metrics.

For symbol-pair read channels, Xia, Liu and Chen~\cite{xlc} investigated function-correcting codes with the symbol-pair metric and derived several bounds on their optimal redundancy. Subsequently, Singh, Singh and Yaakobi~\cite{ssy} generalized these results to the more general $b$-symbol read channel model. Rajan~\cite{sr} further established a Plotkin-type bound for function-correcting codes under the $b$-symbol metric. In the context of the Lee metric, Hareesh, Rashid Ummer and Sundar Rajan~\cite{hrs} proposed explicit constructions of function-correcting codes, while Verma and Singh~\cite{vs} investigated corresponding bounds on the optimal redundancy. Extending the framework beyond finite fields, Liu and Liu~\cite{ll} generalized function-correcting codes to finite chain rings equipped with the homogeneous metric. More recently, Singh and Singh~\cite{ss} initiated the study of function-correcting codes for insertion--deletion channels, further illustrating the broad applicability and flexibility of this emerging coding paradigm.

 In 2025, Zhang, Xu, Zhang and Ge~\cite{zxzg} employed Gray codes to improve several results established in~\cite{lbwy,xlc}. Another active research direction concerns the design of function-correcting codes tailored to specific classes of functions. In this context, Premlal and Sundar Rajan~\cite{pr} investigated function-correcting codes for linear functions. Further developments for various families of functions can be found in~\cite{rrhh,dmkpr,pbmr}. Verma, Singh and Singh~\cite{vss} studied function-correcting $b$-symbol codes associated with locally $(\lambda,\rho,b)$-functions, while Verma and Singh~\cite{vsh} investigated function-correcting codes endowed with the homogeneous metric for linear and locally bounded functions. Additional advances along these directions are reported in~\cite{rml,rrhh data protection,rrhh partition codes}. Collectively, these contributions demonstrate both the versatility of the function-correcting coding framework and its increasing importance across a broad spectrum of communication models and application-driven settings.

Parallel to these developments, the Rosenbloom--Tsfasman metric (RT-metric) has become one of the fundamental metrics for modelling communication over parallel channels. Originally introduced by Niederreiter~\cite{n} in 1987 in the study of point sets, and later systematically developed by Rosenbloom and Tsfasman~\cite{rt}, the RT-metric naturally extends the classical Hamming metric while capturing error structures arising in parallel transmission systems. Rosenbloom and Tsfasman~\cite{rt} established a Singleton-type bound for codes equipped with the RT metric, thereby laying the theoretical foundations of this metric.

Codes endowed with the RT metric are naturally represented as matrix codes. In particular, when each codeword consists of a single row, the RT metric coincides with the classical Hamming metric, showing that the latter appears as a special case of the former. Consequently, many classical coding-theoretic problems have subsequently been revisited within the RT-metric framework.

Over the past decades, a rich theory of RT-metric codes has been developed. Dougherty and Skriganov~\cite{ds} established a MacWilliams-type identity for RT-metric codes, while Skriganov~\cite{s} investigated MDS codes together with the properties of their dual codes. Quistorff~\cite{q} derived new bounds for RT-metric codes and improved the RT analogue of the Plotkin bound. Zhou, Lin and Abdel-Ghaffar~\cite{zla} constructed BCH codes endowed with the RT metric, and subsequently employed Galois--Fourier transforms associated with Hasse derivatives to investigate structural properties of Reed--Solomon and BCH codes under this metric. More recently, Xu, Xu and Du~\cite{xxd} introduced linear complementary dual codes with the RT metric, while Wang, Li and Heng~\cite{wlh} established new upper bounds on the minimum RT distance of linear matrix codes and proposed corresponding constructions. Further developments on RT-metric codes can be found in~\cite{dsm,hk,qgp,j,msg}.

Although numerous families of classical codes have successfully been generalized from the Hamming metric to the RT metric, the function-correcting paradigm has not yet been investigated in this setting. This observation naturally raises the following question:

\begin{center}
\emph{Can the theory of function-correcting codes be extended to the Rosenbloom--Tsfasman metric while preserving its redundancy advantages?}
\end{center}

The primary objective of this paper is to answer this question affirmatively. To this end, we introduce \emph{function-correcting RT codes} (FCRTCs), which constitute the RT-metric counterpart of function-correcting codes under the Hamming metric. Since the RT metric is defined on matrix spaces, FCRTCs naturally take the form of systematic matrix codes. Each codeword is decomposed into two components: an information matrix that carries the source data and a redundancy matrix that provides protection against transmission errors. Within this framework, our main objective is to investigate the fundamental problem of determining the optimal redundancy required by FCRTCs and to establish general bounds, along with explicit constructions, for several important classes of functions.

To address the redundancy optimization problem, we introduce a new class of matrix codes endowed with the RT metric, referred to as \emph{irregular RT-distance codes}. In these codes, the pairwise RT-distance constraints between codewords are prescribed by a nonnegative matrix $D$; accordingly, we also refer to them as \emph{$D$-RT codes}. In addition, we introduce two new mathematical objects, namely \emph{RT-distance matrices} and \emph{RT-function-distance matrices}, which provide an appropriate framework for analysing function-correcting RT codes.

These notions establish a direct connection between the optimal redundancy problem for FCRTCs and the construction of irregular RT-distance codes. This correspondence is one of the central ideas of the paper, as it enables the study of FCRTCs by analysing a more general class of matrix codes. In particular, for an arbitrary symmetric matrix $D$ and assuming the number of rows of the codewords is fixed, we derive both upper and lower bounds on the minimum number of columns required for $D$-RT codes.

Building upon this general framework, we first establish upper and lower bounds on the optimal redundancy of FCRTCs for an arbitrary function, thereby extending several fundamental results in~\cite {lbwy} from the Hamming metric to the RT metric. We then focus on several important classes of functions for which significantly stronger results can be obtained by exploiting their specific algebraic and combinatorial properties.

For RT-weight functions, we derive sharper bounds on the optimal redundancy of FCRTCs by leveraging the structural properties of the RT weight, thereby extending Lemma~6 of~\cite{lbwy}. For RT-weight distribution functions, we propose explicit constructions of FCRTCs and prove that these constructions attain the optimal redundancy in several parameter regimes, thus generalizing Lemma~8 of~\cite{lbwy}. Furthermore, for RT-locally-two-valued binary functions, we determine the optimal redundancy in several cases, extending the corresponding result given in Lemma~5 of~\cite{lbwy}.

The remainder of this paper is organized as follows. Section~2 introduces the notation and recalls the basic definitions required throughout the paper. In Section~3, we develop the general theoretical framework by establishing upper and lower bounds on the optimal redundancy of FCRTCs for arbitrary functions. This section also investigates irregular RT-distance codes and derives general bounds on the minimum column length of $D$-RT codes. Sections~4, 5, and~6 are devoted to RT-weight functions, RT-weight distribution functions, and RT-locally-two-valued binary functions, respectively, in which the general theory is refined to obtain stronger results and explicit constructions. Finally, Section~7 concludes the paper and discusses several directions for future research.

\section{Preliminaries}

Throughout this paper, let $S$ be a set and denote its cardinality by $|S|$. Let
$k$, $r$, $h$ and $t$ be positive integers. We denote by
$\mathbb{F}_2^{h\times k}$ the vector space of all $h\times k$ matrices over the binary
field $\mathbb{F}_2$.

This section introduces the basic concepts and notation used throughout the paper. We first recall the notions of the Rosenbloom-Tsfasman weight and distance on matrix spaces, which form the metric foundation of our work. We then introduce function-correcting RT codes together with a new family of matrix codes, called irregular RT-distance codes, that will play a central role in the development of our theoretical framework

\subsection{Rosenbloom--Tsfasman distance}

We begin by recalling the definitions of the Rosenbloom--Tsfasman (RT) weight and the associated RT distance over the matrix space $\mathbb{F}_2^{h\times k}$. These notions generalize the classical Hamming weight and Hamming distance and provide the natural metric framework for coding over parallel channels.

\begin{Definition}%{\rm (\cite{rt})}
\label{Def1 RT-weight}

Let
\[
A=(\boldsymbol{a}_1,\boldsymbol{a}_2,\ldots,\boldsymbol{a}_k)
\in\mathbb{F}_2^{h\times k},
\]
where
\[
\boldsymbol{a}_i=(a_{1i},a_{2i},\ldots,a_{hi})^\intercal
\in\mathbb{F}_2^{h\times1},
\qquad
i\in\{1,2,\ldots,k\}.
\]

The \emph{Rosenbloom--Tsfasman weight} (or \emph{RT-weight}) of the column vector
$\boldsymbol{a}_i$ is defined by
\begin{equation}\label{Formula RT-weight}
           w_{RT}(\boldsymbol{a}_i)=\left\{
  	         \begin{aligned}
  	          &h-\min\{\,s-1\,|\,a_{si}\neq 0\},&~\text{if } \boldsymbol{a}_i\neq 0,\\
                &0,&~\text{if } \boldsymbol{a}_i=0.
               \end{aligned}
               \right.
        \end{equation}
        
        The \emph{RT distance} between two column vectors
$\boldsymbol{a}_i$ and $\boldsymbol{a}_j$ is defined by
\[
d_{RT}(\boldsymbol{a}_i,\boldsymbol{a}_j)
=
w_{RT}(\boldsymbol{a}_i-\boldsymbol{a}_j).
\]
Furthermore, the RT-weight of the matrix $A$ is given by
\[
w_{RT}(A)=\sum_{i=1}^{k}w_{RT}(\boldsymbol{a}_i).
\]

    \end{Definition}

For illustration, consider the matrix
\[
A=
\begin{pmatrix}
1 & 0 & 1\\
0 & 1 & 0
\end{pmatrix}
\in\mathbb{F}_2^{2\times3}.
\]
Its RT-weight is
\[
w_{RT}(A)=5.
\]
On the other hand, if the entries of $A$ are viewed as a vector in $\mathbb{F}_2^{6}$, its Hamming weight is equal to $3$. This example clearly illustrates that the RT-weight and the Hamming weight measure different structural properties of the same object.

It is worth emphasizing that the RT-weight naturally extends the classical Hamming weight. Indeed, when $h=1$ in Definition~\ref{Def1 RT-weight}, every codeword consists of a single row, and the RT-weight of any matrix
$A\in\mathbb{F}_2^{1\times k}$ coincides exactly with its Hamming weight. Consequently, the RT metric may be viewed as a genuine generalization of the Hamming metric from vector spaces to matrix spaces, making it particularly well suited for coding problems over parallel channels.

\begin{Definition}\label{def RT-distance}
Let $\mathbb{F}_2^{h\times k}$ denote the vector space of all $h\times k$ matrices over $\mathbb{F}_2$. The \emph{Rosenbloom--Tsfasman distance} (abbreviated as the \emph{RT distance}) between two matrices
$A,B\in\mathbb{F}_2^{h\times k}$ is defined by
\[
d_{RT}(A,B)=w_{RT}(A-B).
\]
\end{Definition}
    
    To illustrate Definition~\ref{def RT-distance}, consider the matrices
\[
A=
\begin{pmatrix}
1&0&1\\
0&1&0
\end{pmatrix},
\qquad
B=
\begin{pmatrix}
1&1&1\\
0&1&0
\end{pmatrix},
\qquad
C=
\begin{pmatrix}
0&1&0\\
1&0&1
\end{pmatrix}
\in\mathbb{F}_2^{2\times3}.
\]
A straightforward computation yields
\[
d_{RT}(A,B)=2,\qquad
d_{RT}(A,C)=6,\qquad
d_{RT}(B,C)=4.
\]

It is well known that $d_{RT}$ satisfies the axioms of a metric on
$\mathbb{F}_2^{h\times k}$.
Consequently, the pair
$\left(\mathbb{F}_2^{h\times k},d_{RT}\right)$
forms a metric space, which provides the natural mathematical framework for studying coding problems over parallel channels.

\subsection{Function-correcting RT codes and irregular RT-distance codes}

Having introduced the RT metric, we now define the main coding framework that we investigate in this paper. We first introduce function-correcting RT codes (FCRTCs), which extend the notion of function-correcting codes to the Rosenbloom--Tsfasman metric. We then introduce a more general family of matrix codes, called \emph{irregular RT-distance codes}, along with two auxiliary matrix representations: RT-distance matrices and RT-function-distance matrices. These new notions will provide the key tools for transforming the redundancy optimization problem of FCRTCs into an equivalent problem involving irregular RT-distance codes.

The objective of function-correcting RT codes is to guarantee the reliable recovery of function values after transmission through parallel channels, rather than the recovery of the entire transmitted information matrix. As in the classical framework of function-correcting codes, the code construction depends on a prescribed function defined on the information space.

Let
$
f:\mathbb{F}_{2}^{h\times k}\longrightarrow {\rm Im}(f)
$
be an arbitrary function. This function naturally induces an equivalence relation on the information matrices by declaring two matrices to be equivalent whenever they have the same function value. More precisely, for every
$A,B\in\mathbb{F}_{2}^{h\times k}$,
we define
$
A\sim_f B
\quad\Longleftrightarrow\quad
f(A)=f(B).
$ This equivalence relation induces an equivalence class for every $A\in\mathbb{F}_{2}^{h\times k}$, which we denote by $$[A]_f=\{B\in\mathbb{F}_{2}^{h\times k}\,|\,f(B)=f(A)\}.$$This means that for every $A,B\in\mathbb{F}_{2}^{h\times k}$, $$[A]_f\neq[B]_f
\quad\Longleftrightarrow\quad
f(A)\neq f(B).$$

    Accordingly, unlike classical error-correcting codes, function-correcting RT codes do not aim to distinguish among all information matrices. Instead, they need only distinguish the equivalence classes induced by the function $f$. Consequently, after transmission through parallel channels, the decoder is only required to recover the correct function value, rather than the complete information matrix. This relaxation of the decoding requirement is precisely what enables the reduction of coding redundancy.

We are now ready to formally introduce the notion of function-correcting RT codes. Next, we introduce the definition.

\begin{Definition}\label{Def FCRTCs}
Suppose that the RT-distance between the transmitted codeword and the received word is at most $t$, where $h$ and $k$ are positive integers. Let
$$
f:\mathbb{F}_{2}^{h\times k}\longrightarrow{\rm Im}(f)
$$
be an arbitrary function. Consider a systematic encoding function
$$
Enc:\mathbb{F}_{2}^{h\times k}
\longrightarrow
\mathbb{F}_{2}^{h\times(k+r)},
$$
defined by
$
Enc(A)=(A,p(A)),
\,
A\in\mathbb{F}_{2}^{h\times k},
$
where $p(A)$ denotes the redundancy matrix associated with $A$.

If, for every pair
$A_1,A_2\in\mathbb{F}_{2}^{h\times k}$ satisfying
$
[A_1]_f\neq[A_2]_f,
$
the corresponding codewords satisfy
$$
d_{RT}\bigl(Enc(A_1),Enc(A_2)\bigr)\ge 2t+1,
$$
then
$
\{\,Enc(A)\mid A\in\mathbb{F}_{2}^{h\times k}\,\}
$
is called a \emph{function-correcting RT code} for the function $f$, abbreviated as an \emph{FCRTC}.
\end{Definition}

\begin{Remark}
When $h=1$, the RT metric reduces to the classical Hamming metric. Consequently, an FCRTC coincides with a function-correcting code in the sense of Lenz \emph{et al.}~\cite{lbwy}. Therefore, FCRTCs constitute a genuine extension of the original function-correcting coding framework from the Hamming metric to the more general Rosenbloom--Tsfasman metric.
\end{Remark}

The decoding principle underlying Definition~\ref{Def FCRTCs} is based on the maximum-likelihood decoder with respect to the RT metric. Provided that the RT-distance between the transmitted codeword and the received word does not exceed $t$, the receiver can uniquely determine the correct function value, assuming that both the encoding rule $Enc(\cdot)$ and the function $f(\cdot)$ are known.

It is important to emphasize that the decoder is not required to recover the original information matrix itself. Instead, its sole objective is to identify the correct equivalence class induced by the function $f$, or equivalently, to recover the correct value of $f(A)$. This distinction is precisely what enables function-correcting RT codes to achieve a smaller redundancy than classical error-correcting codes.

The central objective of this paper is therefore to determine the minimum redundancy required to construct FCRTCs with a prescribed error-correction capability.

     The redundancy introduced by an FCRTC takes the form of an additional matrix appended to the information matrix through systematic encoding. Since the number of matrix rows is fixed throughout the paper, the efficiency of an FCRTC is naturally measured by the minimum number of additional columns required to guarantee the prescribed error-correction capability. This leads to the following fundamental parameter, which will be the main object of investigation in the remainder of this paper.

\begin{Definition}\label{Def r_{RT}^{f}(h,k,t)}
Let $h$ and $k$ be positive integers. The \emph{optimal redundancy} of FCRTCs for a function
$f:\mathbb{F}_{2}^{h\times k}\rightarrow{\rm Im}(f)$,
denoted by
$
r_{RT}^{f}(h,k,t),
$
is defined as the smallest integer $r$ for which there exists a function-correcting RT code with a systematic encoding function
$
Enc:\mathbb{F}_{2}^{h\times k}
\longrightarrow
\mathbb{F}_{2}^{h\times(k+r)}
$
for the function $f$ over $\mathbb{F}_{2}^{h\times k}$.
\end{Definition}

\begin{Remark}
The definition is consistent with the intuition that the complexity of the target function influences the amount of redundancy required. In the extreme case where $f$ is a constant function over $\mathbb{F}_{2}^{h\times k}$, every information matrix belongs to the same equivalence class. Consequently, no redundancy is needed to distinguish between different function values, and therefore $r_{RT}^{f}(h,k,t)= 0.$
\end{Remark}

Determining the exact value of $r_{RT}^{f}(h,k,t)$ is the central problem addressed in this paper. As will become apparent in the subsequent sections, this optimization problem can be reformulated in terms of a new family of matrix codes endowed with irregular RT-distance constraints. This alternative viewpoint provides a unified framework for deriving both general bounds and explicit constructions.

For convenience, we introduce the following notation, which will be used throughout the remainder of the paper. For any integer $a$, let
\[
[a]^+ \triangleq \max\{a,0\}.
\]
For every positive integer $M$, let
\[
[M]\triangleq\{1,2,\ldots,M\}.
\]
For a matrix $D$, we denote by $[D]_{ij}$ its $(i,j)$-th entry. Moreover,
$\mathbb{N}_0^{M\times M}$ denotes the set of all $M\times M$ matrices with nonnegative integer entries.

    The systematic structure of FCRTCs naturally separates each codeword into an information part and a redundancy part. Consequently, the problem of determining the optimal redundancy can be reformulated as constructing a family of redundancy matrices that satisfy suitable pairwise RT-distance constraints. This observation motivates the introduction of the following class of matrix codes, which plays a central role throughout the paper.

    \begin{Definition}\label{Def D-RT code}
Let $h,\,M$ be positive integers and let
\[
D\in\mathbb{N}_0^{M\times M}
\]
be a matrix satisfying
\[
[D]_{ii}=0,\qquad i\in[M].
\]

Let
\[
\mathcal P=\{P_1,P_2,\ldots,P_M\}
\subseteq
\mathbb F_2^{h\times r}
\]
be a code of cardinality $M$.

If there exists an ordering of the codewords of $\mathcal P$ such that
\[
d_{RT}(P_i,P_j)\ge [D]_{ij},
\qquad
\forall\, i,j\in[M],
\]
then $\mathcal P$ is called an \emph{irregular RT-distance code}, or simply a \emph{$D$-RT code}.

Furthermore, for a given matrix $D$ and a fixed integer $h$, the smallest integer $r$ for which there exists a $D$-RT code over
$\mathbb F_2^{h\times r}$
is called the \emph{smallest column length} of $D$-RT codes and is denoted by
\[
N_{RT}^{h}(D).
\]
\end{Definition}

\begin{Remark}
Unlike classical codes, whose distance requirements are uniform for every pair of distinct codewords, a $D$-RT code allows different pairs of codewords to satisfy different prescribed RT-distance constraints. The matrix $D$ therefore completely specifies the geometric distance requirements imposed on the code and provides considerable flexibility in the design of matrix codes. As will be shown in the next section, this generalized framework is particularly well suited for characterizing the optimal redundancy of FCRTCs.
\end{Remark}

The next step is to determine how the distance requirements imposed by a function can be encoded in a matrix of prescribed pairwise RT distances. This motivates the introduction of RT-distance matrices and RT-function-distance matrices.  
    \begin{Definition}\label{Def RT-distance matrix}
       Let $f:\mathbb{F}_{2}^{h \times k} \rightarrow {\rm Im}(f)$ be a function and $ A_{1},A_{2},\ldots,A_{M} \in \mathbb{F}_{2}^{h \times k} $ be $ M $ matrices over $ \mathbb{F}_{2}^{h \times k} $, where $ h,k,M $ are positive integers. We define the \emph{RT-distance matrix} $ D_{RT}^{f}(h,k,t,A_{1},A_{2},\ldots,A_{M}) $ for the function $f$ as a $ M \times M $ matrix with entries
       \begin{equation}\label{Formula 2}
           [D_{RT}^{f}(h,k,t,A_{1},A_{2},\ldots,A_{M})]_{ij}=\left\{
  	         \begin{aligned}
  	          &[2t+1-d_{RT}(A_{i},A_{j})]^{+},& \text{if } f(A_{i}) \neq f(A_{j}), \\
                &0,& \text{otherwise}.
               \end{aligned}
               \right.
        \end{equation}
    \end{Definition}

    For a given function, FCRTCs adopt the system encoding method; thus, we can use RT-distance matrices to connect the optimal redundancy of FCRTCs with the smallest column length of $D$-RT codes.

    \begin{Definition}\label{Def RT-function-distance}
      Let $ f:\mathbb{F}_{2}^{h \times k} \rightarrow {\rm Im}(f) $ be a function. For any $ z \in {\rm Im}(f), $ let $ f^{-1}(z) $ be the set of preimages of $ z, $ i.e. , $ f^{-1}(z)={ \{\, A \in \mathbb{F}_{2}^{h \times k} \, | \, f(A)=z \, \} }. $ For $ \forall z_{1}, \, z_{2} \in {\rm Im}(f), $ define the \emph{RT-function-distance} between function values $ d_{RT}^{f}(z_{1},z_{2}) $ as follows.
      $$ d_{RT}^{f}(z_{1},z_{2})={ \min{ \{\, d_{RT}(A,B) \, | \, A \in f^{-1}(z_{1}) , \, B \in f^{-1}(z_{2}) \, \} }}, $$
      specifically, when $ z_{1}=z_{2} \in {\rm Im}(f) $, we have $ d_{RT}^{f}(z_{1},z_{2})=0 $.
    \end{Definition}

    Using Definition \ref{Def RT-function-distance}, we give the definition of the RT-function-distance matrix for a function $f$ as follows.
    \begin{Definition}\label{Def RT-function-distance matrix}
      Let $ f:\mathbb{F}_{2}^{h \times k} \rightarrow {\rm Im}(f) $ be a function, where $|{\rm Im}(f)|=E$ and ${\rm Im}(f)=\{f_{1},\ldots,f_{E}\}$. The \emph{RT-function-distance matrix} for the function $f$ is denoted by an $E \times E$ matrix $ D_{RT}^{f}(h,k,t,f_{1},\ldots,f_{E}), $ whose entries are defined as follows.
       \begin{equation}\label{Formula 3}
           [D_{RT}^{f}(h,k,t,f_{1},\ldots,f_{E})]_{ij}=\left\{
  	         \begin{aligned}
  	          &[2t+1-d_{RT}^{f}(f_{i},f_{j})]^{+},& \text{if } i \neq j , \\
                &0,& \text{otherwise}.
               \end{aligned}
               \right.
        \end{equation}
    \end{Definition}

\section{Bounds of FCRTCs for General Functions}

In this section, we investigate the fundamental problem of determining the optimal redundancy of function-correcting RT codes associated with an arbitrary function
\[
f:\mathbb{F}_2^{h\times k}\longrightarrow{\rm Im}(f),
\]
where
\[
E=|{\rm Im}(f)|.
\]

Our approach proceeds in two successive steps. We first establish an exact correspondence between the optimal redundancy
$r_{RT}^{f}(h,k,t)$
of FCRTCs and the smallest column length of appropriately constructed irregular RT-distance codes. This correspondence transforms the redundancy optimization problem into a purely geometric problem on matrix codes with prescribed pairwise RT-distance constraints.

Building upon this characterization, we then derive general upper and lower bounds on the smallest column length
$N_{RT}^{h}(D)$
of $D$-RT codes. These results immediately translate into corresponding bounds on the optimal redundancy
$r_{RT}^{f}(h,k,t)$
for arbitrary functions. Consequently, this section provides the general theoretical framework upon which all subsequent developments of the paper are based.

\subsection{Connections between FCRTCs and irregular RT-distance codes}

The main objective of this subsection is to establish a precise connection between function-correcting RT codes and irregular RT-distance codes. This connection constitutes the cornerstone of our approach, since it enables the optimal redundancy problem for FCRTCs to be reformulated as a coding-theoretic problem involving matrix codes with prescribed pairwise RT-distance constraints.

The key observation is that FCRTCs employ systematic encoding. Consequently, every codeword naturally decomposes into an information matrix and a redundancy matrix, while the information part is completely determined by the transmitted message. Therefore, the distance constraints required for successful function correction can be entirely transferred to the redundancy matrices.

To formalize this idea, we make use of the RT-distance matrices and RT-function-distance matrices introduced in the previous section. These matrices encode the RT-distance requirements imposed by the function and enable us to construct suitable $D$-RT codes. As a consequence, we establish exact relationships between the optimal redundancy
$r_{RT}^{f}(h,k,t)$
and the smallest column length
$N_{RT}^{h}(D)$
of the corresponding $D$-RT codes.

The following theorem establishes the fundamental link between function-correcting RT codes and irregular RT-distance codes. Its proof relies on the systematic structure of FCRTCs. More precisely, we show that the RT-distance constraints required for successful function correction can be transferred exactly to the redundancy matrices, thereby establishing a one-to-one correspondence between FCRTCs and suitable $D$-RT codes. It shows that the problem of determining the optimal redundancy of an FCRTC is exactly equivalent to determining the smallest column length of an appropriately constructed $D$-RT code. Consequently, the redundancy optimization problem is transformed into a geometric problem involving prescribed pairwise RT-distance constraints.

	\begin{Theorem}\label{Th1}
Let
\[
f:\mathbb{F}_2^{h\times k}\rightarrow{\rm Im}(f)
\]
be an arbitrary function, and let
\[
\{A_1,A_2,\ldots,A_{2^{hk}}\}
=
\mathbb{F}_2^{h\times k}.
\]
Then
\begin{equation}\label{Formula 5}
r_{RT}^{f}(h,k,t)
=
N_{RT}^{h}
\!\left(
D_{RT}^{f}(h,k,t,A_1,A_2,\ldots,A_{2^{hk}})
\right).
\end{equation}
\end{Theorem}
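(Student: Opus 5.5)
The plan is to prove the two inequalities separately. Both rest on one observation: the RT weight of a matrix is a sum of column weights. Hence for $A_i,A_j\in\mathbb{F}_2^{h\times k}$ and $P_i,P_j\in\mathbb{F}_2^{h\times r}$,
\[
d_{RT}\bigl((A_i,P_i),(A_j,P_j)\bigr)=d_{RT}(A_i,A_j)+d_{RT}(P_i,P_j).
\]
This additivity holds because concatenating the information and redundancy parts just juxtaposes columns, and Definition~\ref{Def1 RT-weight} adds the column contributions. Throughout, write $D=D_{RT}^{f}(h,k,t,A_1,\ldots,A_{2^{hk}})$.

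\emph{Step 1: $N_{RT}^{h}(D)\le r_{RT}^{f}(h,k,t)$.} Let $r=r_{RT}^{f}(h,k,t)$ and take an FCRTC with encoder $Enc(A)=(A,p(A))$ into $\mathbb{F}_2^{h\times(k+r)}$. Set $P_i=p(A_i)$, using the indexing of the statement as the ordering. Suppose $f(A_i)\neq f(A_j)$. By Definition~\ref{Def FCRTCs} and the additivity,
\[
d_{RT}(P_i,P_j)\ge 2t+1-d_{RT}(A_i,A_j).
\]
Since RT distances are nonnegative, this gives $d_{RT}(P_i,P_j)\ge[2t+1-d_{RT}(A_i,A_j)]^{+}=[D]_{ij}$. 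When $f(A_i)=f(A_j)$ we have $[D]_{ij}=0$, so the constraint is trivial.

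One technicality must be handled here. Definition~\ref{Def D-RT code} speaks of a code $\mathcal P=\{P_1,\ldots,P_M\}$ of cardinality $M$, whereas the $p(A_i)$ may repeat. I will read the definition as allowing a multiset, or indexed family, of codewords, as is standard in the function-correcting literature. Under this reading $\{P_i\}$ is a $D$-RT code of column length $r$. If distinctness is insisted upon, repeated $P_i$ can only occur when $[D]_{ij}=0$, and the argument is unaffected under the multiset convention.

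\emph{Step 2: $r_{RT}^{f}(h,k,t)\le N_{RT}^{h}(D)$.} Let $r=N_{RT}^{h}(D)$ and let $P_1,\ldots,P_{2^{hk}}\in\mathbb{F}_2^{h\times r}$ be an ordered $D$-RT code. Define $Enc(A_i)=(A_i,P_i)$. Take $i,j$ with $[A_i]_f\neq[A_j]_f$. By additivity and the $D$-RT property,
\[
d_{RT}\bigl(Enc(A_i),Enc(A_j)\bigr)\ge d_{RT}(A_i,A_j)+[2t+1-d_{RT}(A_i,A_j)]^{+}\ge 2t+1.
\]
So $Enc$ defines an FCRTC with $r$ redundancy columns. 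Combining the two steps yields \eqref{Formula 5}.

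The main obstacle is to state the column-additivity decomposition cleanly and to resolve the multiset-versus-set convention in Definition~\ref{Def D-RT code}. One further point concerns the ordering clause in Definition~\ref{Def D-RT code}. In Step 2 we must use the ordering under which the constraints hold, and assign $P_i$ to $A_i$ according to that ordering. This is legitimate because $D$ is indexed by the fixed enumeration $A_1,\ldots,A_{2^{hk}}$.
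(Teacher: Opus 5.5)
Your proof is correct and is essentially the paper's. Step~2 is the paper's construction $A_i\mapsto(A_i,P_i)$ from a $D$-RT code via column additivity, and Step~1 is the direct form of the paper's contradiction argument that an optimal FCRTC's redundancy matrices already satisfy the $D$-RT constraints. Your multiset reading of Definition~\ref{Def D-RT code} is the one the paper uses implicitly, and it is necessary: for constant $f$ one has $r_{RT}^{f}(h,k,t)=0$, whereas $2^{hk}$ distinct redundancy matrices would force $r\ge k$. So state that convention outright rather than suggest the argument survives a distinctness requirement.
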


\begin{proof}
For simplicity, let
\[
r_1=r_{RT}^{f}(h,k,t),
\qquad
r_2=N_{RT}^{h}\!\left(D_{RT}^{f}(h,k,t,A_{1},A_{2},\ldots,A_{2^{hk}})\right).
\]

We prove the theorem by establishing the two inequalities
$r_1\le r_2$ and $r_1\ge r_2$.

We first prove that $r_1\leq r_2$. Let
\[
\{P_1,P_2,\ldots,P_{2^{hk}}\}
\]
be a
$D_{RT}^{f}(h,k,t,A_{1},A_{2},\ldots,A_{2^{hk}})$-RT code over
$\mathbb{F}_2^{h\times r_2}$.
By Definition~\ref{Def D-RT code},
\[
d_{RT}(P_i,P_j)
\geq
\bigl[D_{RT}^{f}(h,k,t,A_{1},A_{2},\ldots,A_{2^{hk}})\bigr]_{ij},
\qquad
\forall\, i,j\in[2^{hk}].
\]

Define the systematic encoding function
\[
Enc:\mathbb F_2^{h\times k}
\longrightarrow
\mathbb F_2^{h\times(k+r_2)},
\qquad
A_i\longmapsto(A_i,P_i).
\]

Consider two matrices
$A_i,A_j\in\mathbb F_2^{h\times k}$
such that
$f(A_i)\neq f(A_j)$.
Since the encoding is systematic,
\[
d_{RT}(Enc(A_i),Enc(A_j))
=
d_{RT}(A_i,A_j)
+
d_{RT}(P_i,P_j).
\]

By Definition~\ref{Def RT-distance matrix},
$
[D_{RT}^{f}(h,k,t,A_{1},A_{2},\ldots,A_{2^{hk}})]_{ij}
\ge
2t+1-d_{RT}(A_i,A_j),
$
and therefore
\[
\begin{aligned}
d_{RT}(Enc(A_i),Enc(A_j))
&\ge
d_{RT}(A_i,A_j)
+
2t+1-d_{RT}(A_i,A_j)  \\
&=2t+1.
\end{aligned}
\]

Hence,
$
\{\,Enc(A_i)\mid
A_i\in\mathbb F_2^{h\times k}\,\}
$
is an FCRTC. Consequently,
$
r_1\le r_2.
$

We now prove the converse inequality. Assume, for contradiction, that
$r_1<r_2$.

Let
\[
Enc:
\mathbb F_2^{h\times k}
\longrightarrow
\mathbb F_2^{h\times(k+r_1)},
\qquad
A_i\longmapsto(A_i,P_i),
\]
be an FCRTC.

Since
$r_1<r_2$,
the collection
\[
\{P_1,P_2,\ldots,P_{2^{hk}}\}
\]
cannot form a
$D_{RT}^{f}(h,k,t,A_{1},A_{2},\ldots,A_{2^{hk}})$-RT code.
Hence there exist
$i_0\neq j_0$
such that
\[
f(A_{i_0})\neq f(A_{j_0})
\]
and
\[
d_{RT}(P_{i_0},P_{j_0})
<
2t+1-
d_{RT}(A_{i_0},A_{j_0}).
\]

Therefore,
\[
\begin{aligned}
d_{RT}(Enc(A_{i_0}),Enc(A_{j_0}))
&=
d_{RT}(A_{i_0},A_{j_0})
+
d_{RT}(P_{i_0},P_{j_0})\\
&<
2t+1,
\end{aligned}
\]
which contradicts Definition~\ref{Def FCRTCs}.

Thus,
\[
r_1\ge r_2.
\]

Combining the two inequalities yields
\[
r_{RT}^{f}(h,k,t)
=
N_{RT}^{h}
\!\left(
D_{RT}^{f}(h,k,t,A_{1},A_{2},\ldots,A_{2^{hk}})
\right),
\]
thereby completing the proof.
\end{proof}

Although Theorem~\ref{Th1} provides an exact characterization of the optimal redundancy, its direct application may become computationally prohibitive when $h$ and $k$ are large, since the associated RT-distance matrix involves all $2^{hk}$ information matrices. To overcome this difficulty, we derive the following corollary, which provides a general lower bound on $r_{RT}^{f}(h,k,t)$ by considering only an arbitrary subset of information matrices. This considerably reduces the computational complexity while still yielding meaningful bounds on the optimal redundancy.

\begin{Corollary}\label{Cor1}
Let
\[
A_{1},A_{2},\ldots,A_{M}\in\mathbb{F}_2^{h\times k}
\]
be distinct matrices. Then, for every function
$f:\mathbb{F}_2^{h\times k}\rightarrow{\rm Im}(f)$,
the optimal redundancy of FCRTCs satisfies
\[
r_{RT}^{f}(h,k,t)
\ge
N_{RT}^{h}\!\left(
D_{RT}^{f}(h,k,t,A_{1},A_{2},\ldots,A_{M})
\right).
\]

Furthermore, if
\[
|{\rm Im}(f)|\ge 2^{k(h-1)}+1,
\]
then
\[
r_{RT}^{f}(h,k,t)
\ge
\left\lceil\frac{2t}{h}\right\rceil.
\]
\end{Corollary}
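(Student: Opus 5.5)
The plan has two parts, matching the two claims of Corollary~\ref{Cor1}.

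\textbf{General lower bound.} The first claim follows from Theorem~\ref{Th1} together with a monotonicity property of $N_{RT}^{h}$ under passing to principal submatrices.

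1. Since $A_1,\dots,A_M$ are distinct, extend them to an enumeration $A_1,\dots,A_M,A_{M+1},\dots,A_{2^{hk}}$ of all of $\mathbb{F}_2^{h\times k}$.
2. Theorem~\ref{Th1} holds for any enumeration of $\mathbb{F}_2^{h\times k}$. So $r_{RT}^f(h,k,t)=N_{RT}^h(D)$, where $D=D_{RT}^f(h,k,t,A_1,\dots,A_{2^{hk}})$.
3. By Definition~\ref{Def RT-distance matrix}, the entry $[D]_{ij}$ depends only on the pair $(A_i,A_j)$. Hence the leading $M\times M$ principal submatrix of $D$ is exactly $D'=D_{RT}^f(h,k,t,A_1,\dots,A_M)$.
4. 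Take a $D$-RT code $\{P_1,\dots,P_{2^{hk}}\}\subseteq\mathbb{F}_2^{h\times r}$ with $r=N_{RT}^h(D)$, ordered as in Definition~\ref{Def D-RT code}. Then $\{P_1,\dots,P_M\}$ satisfies $d_{RT}(P_i,P_j)\ge[D']_{ij}$ for all $i,j\in[M]$.
5. One caveat: Definition~\ref{Def D-RT code} speaks of a ``code of cardinality $M$''. If some $P_i$ coincide, then $d_{RT}(P_i,P_j)=0$, so $[D']_{ij}=0$ for that pair. If the definition is read as requiring distinct codewords, I would note that the constraints are what matter and treat the code as a list.
6. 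This gives $N_{RT}^h(D')\le N_{RT}^h(D)=r_{RT}^f(h,k,t)$.

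\textbf{The bound $\lceil 2t/h\rceil$.} The idea is to find two matrices $A,B$ with $f(A)\neq f(B)$ and $d_{RT}(A,B)=1$, and then apply the first part with $M=2$.

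1. \emph{Partition.} Split $\mathbb{F}_2^{h\times k}$ into $2^{k(h-1)}$ classes according to the first $h-1$ rows. Each class has $2^k$ elements, which differ only in row $h$.
2. \emph{Distances within a class.} If $A,B$ lie in the same class, every nonzero column of $A-B$ has its first nonzero entry in row $h$. By \eqref{Formula RT-weight} each such column has RT-weight $h-(h-1)=1$. So $d_{RT}(A,B)$ equals the Hamming distance between the last rows.
3. \emph{Counting.} If $f$ were constant on every class, then $|{\rm Im}(f)|\le 2^{k(h-1)}$. This contradicts the hypothesis, so $f$ takes two distinct values on some class.
4. \emph{Adjacent pair.} That class is a copy of the hypercube $\mathbb{F}_2^k$, which is connected under single-entry flips. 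Walking along a path from a matrix with one value to a matrix with the other, some consecutive pair $A,B$ must satisfy $f(A)\ne f(B)$ and $d_{RT}(A,B)=1$.
5. \emph{Distance matrix.} The $2\times 2$ matrix $D_{RT}^f(h,k,t,A,B)$ then has off-diagonal entries $[2t+1-1]^+=2t$.
6. \emph{Weight bound.} Any $D$-RT code for this matrix in $\mathbb{F}_2^{h\times r}$ contains $P_1,P_2$ with $d_{RT}(P_1,P_2)\ge 2t$. But every column has RT-weight at most $h$, so $d_{RT}(P_1,P_2)\le hr$.
7. \emph{Conclusion.} Hence $r\ge 2t/h$, and since $r$ is an integer, $r\ge\lceil 2t/h\rceil$. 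The first part then gives $r_{RT}^f(h,k,t)\ge\lceil 2t/h\rceil$.

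The main obstacle is step 4 of the second part. A plain pigeonhole argument only places two distinct function values in the same class, at distance possibly as large as $k$. The hypercube-path argument is needed to bring that distance down to $1$.
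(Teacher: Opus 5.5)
Your proposal is correct and follows essentially the same route as the paper. Like the paper, you restrict a full $D$-RT code from Theorem~\ref{Th1} to a principal submatrix, then partition by the first $h-1$ rows to find a pair at RT-distance $1$ with distinct function values, and finish with $d_{RT}(P_1,P_2)\le hr$. The only differences are cosmetic: you find the adjacent pair by a direct hypercube-path argument rather than by contradiction, and you omit the matching upper bound $N_{RT}^{h}=\lceil 2t/h\rceil$, which the corollary does not need.
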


\begin{proof}
For simplicity, let
\[
r_1=r_{RT}^{f}(h,k,t),
\qquad
r_2=
N_{RT}^{h}\!\left(
D_{RT}^{f}(h,k,t,A_{1},A_{2},\ldots,A_{M})
\right).
\]

By Theorem~\ref{Th1},
\[
r_1=
N_{RT}^{h}\!\left(
D_{RT}^{f}(h,k,t,A_{1},A_{2},\ldots,A_{2^{hk}})
\right),
\]
where
\[
\mathbb{F}_2^{h\times k}
=
\{A_1,A_2,\ldots,A_{2^{hk}}\}.
\]

Let
\[
\{P_1,P_2,\ldots,P_{2^{hk}}\}
\]
be a
$D_{RT}^{f}(h,k,t,A_{1},A_{2},\ldots,A_{2^{hk}})$-RT code over
$\mathbb{F}_2^{h\times r_1}$.
By Definition~\ref{Def D-RT code},
\[
d_{RT}(P_i,P_j)
\ge
[D_{RT}^{f}(h,k,t,A_{1},A_{2},\ldots,A_{2^{hk}})]_{ij},
\qquad
\forall\,i,j\in[2^{hk}].
\]

Since
\[
\{A_1,A_2,\ldots,A_M\}
\subseteq
\mathbb{F}_2^{h\times k},
\]
the matrix
\[
D_{RT}^{f}(h,k,t,A_{1},A_{2},\ldots,A_M)
\]
is a principal submatrix of
\[
D_{RT}^{f}(h,k,t,A_{1},A_{2},\ldots,A_{2^{hk}}).
\]
Consequently,
\[
\{P_1,P_2,\ldots,P_M\}
\]
forms a
$D_{RT}^{f}(h,k,t,A_{1},A_{2},\ldots,A_M)$-RT code over
$\mathbb{F}_2^{h\times r_1}$.
Therefore,
\[
r_{RT}^{f}(h,k,t)
\ge
N_{RT}^{h}
\!\left(
D_{RT}^{f}(h,k,t,A_{1},A_{2},\ldots,A_M)
\right).
\]

We now prove the second assertion.

Assume that
\[
|{\rm Im}(f)|
\ge
2^{(h-1)k}+1.
\]

We first claim that there exist two matrices
\[
A_1,A_2\in\mathbb{F}_2^{h\times k}
\]
such that
\[
d_{RT}(A_1,A_2)=1
\]
and
\[
f(A_1)\neq f(A_2).
\]

Suppose, to the contrary, that
\[
f(A_1)=f(A_2)
\]
for every pair of matrices satisfying
\[
d_{RT}(A_1,A_2)=1.
\]

Define
\[
S=
\left\{
A\in\mathbb{F}_2^{h\times k}
\;\middle|\;
\text{the last row of }A\text{ is }(0,0,\ldots,0)
\right\}.
\]

For every
$B\in S$, let
\[
S_B=
\left\{
A\in\mathbb{F}_2^{h\times k}
\;\middle|\;
\text{the first }(h-1)\text{ rows of }A
\text{ coincide with those of }B
\right\},
\]
and
\[
B_{RT}(B,1)
=
\left\{
A\in\mathbb{F}_2^{h\times k}
\;\middle|\;
d_{RT}(A,B)=1
\right\}.
\]

Clearly,
\[
\mathbb{F}_2^{h\times k}
=
\bigcup_{B\in S}S_B.
\]

Fix an arbitrary matrix
$B_0\in S$.
For every
\[
B_1\in B_{RT}(B_0,1),
\]
our assumption implies that
\[
f(B_1)=f(B_0).
\]

Similarly, for every
\[
B_2\in B_{RT}(B_1,1),
\]
we obtain
\[
f(B_2)=f(B_1)=f(B_0).
\]

Iterating this argument, every matrix belonging to
$S_{B_0}$
has the same function value as
$B_0$.
Consequently, each set
$S_B$
corresponds to a single function value.

Since
\[
|S|
=
2^{(h-1)k},
\]
it follows that
\[
|{\rm Im}(f)|
\le
2^{(h-1)k},
\]
contradicting the hypothesis.
Therefore, there exist
$A_1,A_2\in\mathbb{F}_2^{h\times k}$
such that
\[
d_{RT}(A_1,A_2)=1
\]
and
\[
f(A_1)\neq f(A_2).
\]

By the first part of the proof,
\[
r_{RT}^{f}(h,k,t)
\ge
N_{RT}^{h}
\!\left(
D_{RT}^{f}(h,k,t,A_1,A_2)
\right),
\]
where
\[
D_{RT}^{f}(h,k,t,A_1,A_2)
=
\begin{pmatrix}
0&2t\\
2t&0
\end{pmatrix}.
\]

Consider the code
$
\mathcal P=\{P_1,P_2\}
$
over
$
\mathbb{F}_2^{h\times
\left\lceil\frac{2t}{h}\right\rceil},
$
where
\[
P_1=
\begin{pmatrix}
0&0&\cdots&0\\
0&0&\cdots&0\\
\vdots&\vdots&\ddots&\vdots\\
0&0&\cdots&0
\end{pmatrix},
\qquad
P_2=
\begin{pmatrix}
1&1&\cdots&1\\
1&1&\cdots&1\\
\vdots&\vdots&\ddots&\vdots\\
1&1&\cdots&1
\end{pmatrix}.
\]

Since
\[
d_{RT}(P_1,P_2)
=
h
\left\lceil
\frac{2t}{h}
\right\rceil
\ge
2t,
\]
Definition~\ref{Def D-RT code} implies that
$\mathcal P$
is a
$D_{RT}^{f}(h,k,t,A_1,A_2)$-RT code.
Hence,
\[
N_{RT}^{h}
\!\left(
D_{RT}^{f}(h,k,t,A_1,A_2)
\right)
\le
\left\lceil
\frac{2t}{h}
\right\rceil.
\]

Conversely, let
\[
\mathcal P'
=
\{P'_1,P'_2\}
\]
be any
$D_{RT}^{f}(h,k,t,A_1,A_2)$-RT code over
$\mathbb{F}_2^{h\times m}$.
By Definition~\ref{Def D-RT code},
$
d_{RT}(P'_1,P'_2)
\ge
2t.
$
On the other hand, by Definition~\ref{Def1 RT-weight},
$
d_{RT}(P'_1,P'_2)
\le
hm.
$
Therefore,
$
m
\ge
\frac{2t}{h}.
$
Since $m$ is an integer,
$
m
\ge
\left\lceil
\frac{2t}{h}
\right\rceil.
$
Consequently,
$
N_{RT}^{h}
\!\left(
D_{RT}^{f}(h,k,t,A_1,A_2)
\right)
=
\left\lceil
\frac{2t}{h}
\right\rceil.
$

Finally,
$
r_{RT}^{f}(h,k,t)
\ge
\left\lceil
\frac{2t}{h}
\right\rceil,
$
which completes the proof.
\end{proof}

   \begin{Remark}
Corollary~\ref{Cor1} holds for arbitrary positive integers $h$ and $k$. In the particular case where $h=1$, the RT metric coincides with the classical Hamming metric. Consequently, Corollary~\ref{Cor1} reduces exactly to Corollary~1 of~\cite{lbwy}. This further illustrates that the present framework constitutes a natural extension of the classical theory of function-correcting codes to the Rosenbloom--Tsfasman metric.
\end{Remark}

The lower bound established in Corollary~\ref{Cor1} is derived from the RT-distance matrix associated with the information matrices. We now turn to the complementary problem of deriving a general upper bound on the optimal redundancy. Instead of considering all information matrices individually, we exploit the RT-function-distance matrix, which captures only the distance requirements between distinct function values. This significantly reduces the size of the underlying matrix and yields a general upper bound for $r_{RT}^{f}(h,k,t)$.

\begin{Theorem}\label{Th2}
Let 
\[
f:\mathbb{F}_2^{h\times k}\longrightarrow{\rm Im}(f)
\]
be a function. Then
\begin{equation}\label{Formula 6}
r_{RT}^{f}(h,k,t)
\le
N_{RT}^{h}
\!\left(
D_{RT}^{f}(h,k,t,f_{1},\ldots,f_{E})
\right),
\end{equation}
where
$
E=|{\rm Im}(f)|
$
and
$
{\rm Im}(f)=\{f_{1},f_{2},\ldots,f_{E}\}.
$
\end{Theorem}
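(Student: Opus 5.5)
The plan is to build an FCRTC directly from an optimal $D$-RT code for the function-distance matrix, assigning redundancy matrices according to function values only. Put $D_f=D_{RT}^{f}(h,k,t,f_1,\ldots,f_E)$ and $r=N_{RT}^{h}(D_f)$. Fix a $D_f$-RT code $\{Q_1,\ldots,Q_E\}\subseteq\mathbb{F}_2^{h\times r}$, ordered as in Definition~\ref{Def D-RT code}, so that $d_{RT}(Q_i,Q_j)\ge [D_f]_{ij}$ for all $i,j\in[E]$.

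Define the systematic encoder $Enc(A)=(A,p(A))$ with $p(A)=Q_i$ whenever $f(A)=f_i$. All matrices in one equivalence class $[A]_f$ thus receive the same redundancy matrix. This is harmless, since Definition~\ref{Def FCRTCs} places no constraint on pairs with equal function values.

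Now take $A,B$ with $f(A)=f_i\neq f_j=f(B)$. The RT weight of a matrix is the sum of its column weights, so the distance splits over the information and redundancy parts:
\[
d_{RT}(Enc(A),Enc(B))=d_{RT}(A,B)+d_{RT}(Q_i,Q_j).
\]
By Definition~\ref{Def RT-function-distance}, $d_{RT}(A,B)\ge d_{RT}^{f}(f_i,f_j)$, because $A\in f^{-1}(f_i)$ and $B\in f^{-1}(f_j)$. We also have $d_{RT}(Q_i,Q_j)\ge[2t+1-d_{RT}^{f}(f_i,f_j)]^+\ge 2t+1-d_{RT}^{f}(f_i,f_j)$. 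Adding these two bounds gives a total distance of at least $2t+1$. Hence the image of $Enc$ is an FCRTC with $r$ redundancy columns, and $r_{RT}^{f}(h,k,t)\le r$.

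The argument has no real obstacle. The only point needing care is that the minimum in the definition of $d_{RT}^f$ is what allows the per-class bound to hold uniformly for every pair of preimages.

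An equivalent route goes through Theorem~\ref{Th1}. Lift $\{Q_i\}$ to a $D_{RT}^{f}(h,k,t,A_1,\ldots,A_{2^{hk}})$-RT code by setting $P_l=Q_i$ whenever $f(A_l)=f_i$. The matrix entries satisfy $[2t+1-d_{RT}(A_l,A_m)]^+\le[2t+1-d_{RT}^f(f_i,f_j)]^+$. For pairs with equal function values the required entry is $0$, so repeating a codeword $Q_i$ across a class does not violate the definition. Either route finishes the proof.
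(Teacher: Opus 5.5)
Your main argument is correct and is essentially the paper's own proof. You take a $D_{RT}^{f}(h,k,t,f_1,\ldots,f_E)$-RT code, give every preimage of $f_i$ the redundancy matrix $Q_i$, split the RT distance over the two column blocks, and add $d_{RT}(A,B)\ge d_{RT}^f(f_i,f_j)$ to $d_{RT}(Q_i,Q_j)\ge 2t+1-d_{RT}^f(f_i,f_j)$. Your alternative route via Theorem~\ref{Th1} also works, but only if a $D$-RT code is read as an indexed family in which codewords may repeat; the paper tacitly uses this reading in Theorem~\ref{Th1}, yet Definition~\ref{Def D-RT code} literally requires a code of cardinality $M$.
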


\begin{proof}
Let
\[
{\rm Im}(f)=\{f_1,f_2,\ldots,f_E\},
\qquad
E=|{\rm Im}(f)|.
\]
For simplicity, let
\[
r_1=r_{RT}^{f}(h,k,t),
\qquad
r_2=
N_{RT}^{h}\!\left(
D_{RT}^{f}(h,k,t,f_1,\ldots,f_E)
\right).
\]

Let
\[
\{P_1,P_2,\ldots,P_E\}
\]
be a
$D_{RT}^{f}(h,k,t,f_1,\ldots,f_E)$-RT code over
$\mathbb{F}_2^{h\times r_2}$.
By Definition~\ref{Def D-RT code},
\[
d_{RT}(P_i,P_j)
\ge
[D_{RT}^{f}(h,k,t,f_1,\ldots,f_E)]_{ij},
\qquad
\forall\,i,j\in[E].
\]

Moreover, by Definition~\ref{Def RT-function-distance},
\[
d_{RT}(P_i,P_j)
\ge
2t+1-d_{RT}(f_i,f_j),
\qquad
\forall\, i\neq j.
\]

We now define a systematic encoding function
\[
Enc:
\mathbb{F}_2^{h\times k}
\longrightarrow
\mathbb{F}_2^{h\times(k+r_2)},
\]
by
\[
Enc(A)=(A,P_i),
\]
where
\[
f(A)=f_i.
\]

Consider arbitrary matrices
\[
A,B\in\mathbb{F}_2^{h\times k},
\]
and suppose that
\[
f(A)=f_i,
\qquad
f(B)=f_j.
\]
If
$f_i\neq f_j$,
then
\[
\begin{aligned}
d_{RT}(Enc(A),Enc(B))
&=
d_{RT}(A,B)
+
d_{RT}(P_i,P_j)\\
&\ge
d_{RT}(f_i,f_j)
+
2t+1-d_{RT}(f_i,f_j)\\
&=
2t+1,
\end{aligned}
\]
where the first inequality follows from Definition~\ref{Def RT-function-distance}.

Hence, Definition~\ref{Def FCRTCs} implies that
$
\{\,Enc(A)\mid A\in\mathbb{F}_2^{h\times k}\,\}
$
is an FCRTC over
$\mathbb{F}_2^{h\times(k+r_2)}$.
Therefore,
$
r_1\le r_2,
$
that is,
$
r_{RT}^{f}(h,k,t)
\le
N_{RT}^{h}
\!\left(
D_{RT}^{f}(h,k,t,f_1,\ldots,f_E)
\right).
$
This completes the proof.
\end{proof}

The previous results provide complementary estimates for the optimal redundancy of FCRTCs: Corollary~\ref{Cor1} establishes a general lower bound, whereas Theorem~\ref{Th2} provides a corresponding upper bound. A natural question is therefore to determine when these two bounds coincide.

Recall that the entries of an RT-function-distance matrix are defined in terms of the RT distances between suitable information matrices. Consequently, whenever the RT-function-distance matrix can be realized by a family of representative information matrices, the lower and upper bounds become identical. The following corollary identifies such a situation and provides an exact characterization of the optimal redundancy.
 
 \begin{Corollary}\label{Cor2}
Let
\[
f:\mathbb{F}_2^{h\times k}\rightarrow{\rm Im}(f)
\]
be a function and $E=|{\rm Im}(f)|$. Suppose that there exists a family of representative information matrices
\[
A_1,A_2,\ldots,A_E
\]
such that
\[
\{f(A_1),f(A_2),\ldots,f(A_E)\}
=
{\rm Im}(f)
\]
and
\[
D_{RT}^{f}(h,k,t,f_1,\ldots,f_E)
=
D_{RT}^{f}(h,k,t,A_1,\ldots,A_E).
\]
Then
\[
r_{RT}^{f}(h,k,t)
=
N_{RT}^{h}
\!\left(
D_{RT}^{f}(h,k,t,f_1,\ldots,f_E)
\right).
\]
\end{Corollary}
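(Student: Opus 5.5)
The argument is a sandwich between Corollary~\ref{Cor1} and Theorem~\ref{Th2}.

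\emph{Upper bound.} Theorem~\ref{Th2} applies to an arbitrary function $f$, so without any hypothesis we already have
\[
r_{RT}^{f}(h,k,t)\le N_{RT}^{h}\!\left(D_{RT}^{f}(h,k,t,f_1,\ldots,f_E)\right).
\]

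\emph{Lower bound.} First I check that the representatives $A_1,\ldots,A_E$ are pairwise distinct, since Corollary~\ref{Cor1} is stated for distinct matrices. Because $\{f(A_1),\ldots,f(A_E)\}={\rm Im}(f)$ has exactly $E$ elements, the values $f(A_1),\ldots,f(A_E)$ are pairwise distinct, and therefore so are the matrices $A_i$.

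Next I fix the indexing. After relabelling the image we may assume $f(A_i)=f_i$ for all $i$. This is implicit in the hypothesis that the two matrices are equal entrywise, and it is harmless anyway: a simultaneous permutation of the rows and columns of $D$ does not change $N_{RT}^{h}(D)$, because Definition~\ref{Def D-RT code} allows any ordering of the codewords.

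With these two points settled, the first part of Corollary~\ref{Cor1} applied to $A_1,\ldots,A_E$ gives
\[
r_{RT}^{f}(h,k,t)\ge N_{RT}^{h}\!\left(D_{RT}^{f}(h,k,t,A_1,\ldots,A_E)\right).
\]
By hypothesis the matrix on the right equals $D_{RT}^{f}(h,k,t,f_1,\ldots,f_E)$, so the two bounds coincide and equality follows.

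The only delicate step is this bookkeeping: showing the representatives are distinct and that their order matches the labelling $f_1,\ldots,f_E$. Both points are settled as above, so I expect no real obstacle.
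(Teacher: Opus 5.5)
Your proof is correct and follows the paper's argument: the lower bound comes from the first part of Corollary~\ref{Cor1} applied to $A_1,\ldots,A_E$ together with the hypothesised matrix equality, and the upper bound comes from Theorem~\ref{Th2}. Your check that the $A_i$ are pairwise distinct is a useful detail the paper omits; the relabelling step is harmless but unnecessary, because the hypothesis already equates the two matrices exactly as indexed.
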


  \begin{proof}
By Corollary~\ref{Cor1},
\[
r_{RT}^{f}(h,k,t)
\ge
N_{RT}^{h}
\!\left(
D_{RT}^{f}(h,k,t,A_1,\ldots,A_E)
\right).
\]
Since
\[
D_{RT}^{f}(h,k,t,A_1,\ldots,A_E)
=
D_{RT}^{f}(h,k,t,f_1,\ldots,f_E),
\]
we obtain
\[
r_{RT}^{f}(h,k,t)
\ge
N_{RT}^{h}
\!\left(
D_{RT}^{f}(h,k,t,f_1,\ldots,f_E)
\right).
\]
On the other hand, Theorem~\ref{Th2} yields the reverse inequality. Therefore,
\[
r_{RT}^{f}(h,k,t)
=
N_{RT}^{h}
\!\left(
D_{RT}^{f}(h,k,t,f_1,\ldots,f_E)
\right),
\]
which completes the proof.
\end{proof}

\subsection{Bounds on $N_{RT}^{h}(D)$ for Certain Classes of Matrices}

The results established in the previous subsection reduce the problem of determining the optimal redundancy of FCRTCs to the problem of computing the smallest column length
$
N_{RT}^{h}(D)
$
of suitably defined $D$-RT codes. Consequently, deriving bounds on
$N_{RT}^{h}(D)$
immediately yields corresponding bounds on the optimal redundancy
$r_{RT}^{f}(h,k,t)$
for arbitrary functions.

Motivated by this observation, we now investigate upper and lower bounds on
$N_{RT}^{h}(D)$
for several classes of matrices $D$. These results constitute the main technical ingredients for the general redundancy bounds established later in this section.

We begin by recalling a useful result of Rosenbloom and Tsfasman~\cite{rt}, which will play a key role in deriving our subsequent lower bounds.

\begin{Lemma}\label{Lem2}{\rm(\cite{rt})}
Let
\[
X=(x_1,x_2,\ldots,x_h)^{\intercal},
\qquad
Y=(y_1,y_2,\ldots,y_h)^{\intercal}
\in\mathbb{F}_2^{h},
\]
and let $K$ be a positive constant.

Define
\[
d(X,Y)
=
h-
\max\{
\,i
\mid
x_1=y_1,\,
x_2=y_2,\,
\ldots,\,
x_i=y_i
\},
\]
and
\[
F(t)
=
\sum_{X,Y\in\mathbb{F}_2^{h}}
d(X,Y)t_Xt_Y,
\]
where
$t_X$
is a nonnegative real number for every
$X\in\mathbb{F}_2^{h}$
satisfying
\[
\sum_{X\in\mathbb{F}_2^{h}}t_X=K.
\]

Then the maximum value of $F(t)$ is
\[
K^2\left(h+2^{-h}-1\right).
\]
\end{Lemma}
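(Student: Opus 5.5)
The plan is to argue by induction on $h$, using the prefix-tree structure of $\mathbb{F}_2^{h}$. Throughout, the sum defining $F(t)$ runs over ordered pairs $(X,Y)$, and $d(X,X)=0$ (with the convention that the common-prefix length of two vectors differing in $x_1$ is $0$). Write $c_h = h+2^{-h}-1$. I will show that $F(t)\le c_hK^2$ for every admissible $t$, with equality when $t_X=K/2^h$ for all $X$. The bound is homogeneous of degree two in $t$, so the value of $K$ plays no real role.

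\emph{Base case.} For $h=1$ we have $F=2t_0t_1\le (t_0+t_1)^2/2=K^2/2=c_1K^2$, with equality at $t_0=t_1=K/2$. One can also start from the trivial case $h=0$, where $F=0=c_0K^2$.

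\emph{Inductive step.} Split $\mathbb{F}_2^{h}$ according to the first coordinate, and set $T_b=\sum_{x_1=b}t_X$ for $b\in\{0,1\}$, so that $T_0+T_1=K$.
\begin{itemize}
\item If $x_1\ne y_1$, the common prefix is empty, so $d(X,Y)=h$.
\item If $x_1=y_1$, write $X=(x_1,X')$ and $Y=(y_1,Y')$. The common prefix of $X,Y$ is one longer than that of $X',Y'$, so $d(X,Y)=(h-1)-i'=d_{h-1}(X',Y')$, the analogous function on $\mathbb{F}_2^{h-1}$.
\end{itemize}
Hence
\[
F_h(t)=2hT_0T_1+F_{h-1}(t^{(0)})+F_{h-1}(t^{(1)}),
\]
where $t^{(b)}$ is the restriction of $t$ to the half with $x_1=b$. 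By the induction hypothesis,
\[
F_h(t)\le 2hT_0T_1+c_{h-1}(T_0^2+T_1^2)=c_{h-1}K^2+2(h-c_{h-1})T_0T_1 .
\]
Here $h-c_{h-1}=2-2^{-(h-1)}\ge 0$, and $T_0T_1\le K^2/4$, so
\[
F_h(t)\le K^2\,\frac{c_{h-1}+h}{2}=K^2\bigl(h-1+2^{-h}\bigr)=c_hK^2 .
\]

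\emph{Attainment.} Take the uniform distribution $t_X=K/2^h$. Then $T_0=T_1=K/2$, and each half carries the uniform distribution with total mass $K/2$. By induction each half attains $c_{h-1}(K/2)^2$, so every inequality above is an equality. Therefore the maximum of $F(t)$ is exactly $K^2(h+2^{-h}-1)$.

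The only delicate points are bookkeeping rather than genuine obstacles. First, the sum must be over ordered pairs: this is what produces the factor $2$ in $2hT_0T_1$ and matches the value $K^2/2$ at $h=1$. Second, the coefficient $h-c_{h-1}$ must be nonnegative, so that the maximum over the split falls at $T_0=T_1$ rather than at an endpoint; this holds since $2-2^{-(h-1)}\ge 1$ for all $h\ge 1$.
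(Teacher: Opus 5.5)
Your proof is correct. The first-coordinate split gives $F_h(t)=2hT_0T_1+F_{h-1}(t^{(0)})+F_{h-1}(t^{(1)})$ over ordered pairs. Since $h-c_{h-1}=2-2^{1-h}>0$, the bound $c_{h-1}K^2+2(h-c_{h-1})T_0T_1$ is largest at $T_0=T_1=K/2$, and the identity $(c_{h-1}+h)/2=c_h$ holds. The uniform weighting shows the bound is attained. The degenerate case $T_b=0$, which the induction hypothesis must also cover even though the lemma assumes $K>0$, is trivial because all weights in that half vanish.

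The paper does not prove this lemma at all. It is imported from Rosenbloom and Tsfasman~\cite{rt}, so there is no internal argument to compare against, and your induction supplies a self-contained justification. Your explicit conventions are the ones the paper needs when it applies the result in the proof of Lemma~\ref{Lem3}. There the sum $\sum_{i,j}d_{RT}(P_i,P_j)$ runs over ordered pairs, diagonal included, and is later converted via $\sum_{i,j}[D]_{ij}=2\sum_{i<j}[D]_{ij}$. Likewise, taking the common prefix length to be $0$ when $x_1\neq y_1$ (so $d(X,Y)=h$) is exactly what makes $d$ coincide with $d_{RT}$ on columns.

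If you want a non-inductive variant, note that $d(X,Y)$ equals the number of $s\in[h]$ for which the length-$s$ prefixes of $X$ and $Y$ differ. Hence $F(t)=\sum_{s=1}^{h}\bigl(K^2-\sum_{u\in\mathbb{F}_2^{s}}T_u^2\bigr)$, where $T_u$ is the total weight on prefix $u$. Cauchy--Schwarz gives $\sum_u T_u^2\ge K^2/2^s$, so $F(t)\le K^2\sum_{s=1}^{h}(1-2^{-s})=K^2(h+2^{-h}-1)$ directly, with equality for the uniform weighting. This version also extends verbatim to larger alphabets.
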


Observe that the distance $d(X,Y)$ introduced in Lemma~\ref{Lem2} coincides with the RT distance defined in Definition~\ref{def RT-distance}. This observation enables us to exploit the extremal property established in Lemma~\ref{Lem2} to derive a general lower bound on the smallest column length of $D$-RT codes. The resulting estimate holds for every nonnegative symmetric matrix.

\begin{Lemma}\label{Lem3}
Let $h$ and $M$ be positive integers. Let
\[
D\in\mathbb{N}_0^{M\times M}
\]
be a nonnegative symmetric matrix satisfying
\[
[D]_{ii}=0,
\qquad
\forall\,i\in[M].
\]
Then
\[
N_{RT}^{h}(D)
\ge
\frac{2^{h+1}}
{M^2\bigl((h-1)2^h+1\bigr)}
\sum_{\substack{i,j\\ i<j}}
[D]_{ij}.
\]
\end{Lemma}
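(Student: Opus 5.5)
We use a Plotkin-type double-counting argument: bound the total pairwise RT distance of a $D$-RT code from below via $D$ and from above column by column via Lemma~\ref{Lem2}. Let $r=N_{RT}^{h}(D)$ and let $\mathcal P=\{P_1,\ldots,P_M\}\subseteq\mathbb F_2^{h\times r}$ be a $D$-RT code, ordered so that $d_{RT}(P_i,P_j)\ge [D]_{ij}$ for all $i,j$. The quantity to estimate is
\[
T=\sum_{i=1}^{M}\sum_{j=1}^{M} d_{RT}(P_i,P_j).
\]

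\textbf{Lower bound.} Since $D$ is symmetric with zero diagonal, and the RT distance is symmetric,
\[
T\ge \sum_{i,j}[D]_{ij}=2\sum_{i<j}[D]_{ij}.
\]

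\textbf{Upper bound, column by column.} By Definition~\ref{Def1 RT-weight}, the RT distance is additive over columns, so
\[
T=\sum_{\ell=1}^{r}\sum_{i,j} d_{RT}\bigl(\boldsymbol p_{i\ell},\boldsymbol p_{j\ell}\bigr),
\]
where $\boldsymbol p_{i\ell}$ denotes the $\ell$-th column of $P_i$. Fix a column $\ell$. For each $X\in\mathbb F_2^{h}$, let $t_X$ be the number of indices $i$ with $\boldsymbol p_{i\ell}=X$; then $\sum_X t_X=M$. The inner double sum equals $\sum_{X,Y} d(X,Y)\,t_Xt_Y$, where $d$ is the distance of Lemma~\ref{Lem2}, which coincides with $d_{RT}$ on columns as remarked after that lemma. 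Applying Lemma~\ref{Lem2} with $K=M$, each column contributes at most $M^2(h+2^{-h}-1)$. Summing over the $r$ columns gives
\[
T\le rM^2\bigl(h-1+2^{-h}\bigr)=rM^2\,\frac{(h-1)2^h+1}{2^h}.
\]

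\textbf{Combining.} The two bounds together yield
\[
2\sum_{i<j}[D]_{ij}\le rM^2\frac{(h-1)2^h+1}{2^h},
\]
which rearranges to exactly the inequality in the statement.

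The only delicate point is matching conventions. We must check that $d(X,Y)$ in Lemma~\ref{Lem2} equals $w_{RT}(X-Y)$: both are $h$ minus the length of the common prefix, or equivalently $h-(s-1)$ where $s$ is the first differing coordinate. We must also make sure the double sum runs over ordered pairs, including the zero diagonal, so that the factor $2$ is counted correctly.
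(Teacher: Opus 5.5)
Your proposal is correct and takes the same approach as the paper: a Plotkin-type double count of $\sum_{i,j} d_{RT}(P_i,P_j)$, bounded below by $2\sum_{i<j}[D]_{ij}$ and above, column by column, by $rM^2\bigl(h+2^{-h}-1\bigr)$ via Lemma~\ref{Lem2} with $K=M$. Your explicit check that the distance in Lemma~\ref{Lem2} coincides with $w_{RT}(X-Y)$ (both equal $h$ minus the common-prefix length) is a point the paper only asserts in passing.
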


\begin{proof}
Let
\[
r=N_{RT}^{h}(D),
\]
and let
\[
C=\{P_1,P_2,\ldots,P_M\}
\subseteq
\mathbb{F}_2^{h\times r}
\]
be a $D$-RT code. By Definition~\ref{Def D-RT code},
\[
d_{RT}(P_i,P_j)\ge [D]_{ij},
\qquad
\forall\, i,j\in[M].
\]

For each column position
$j\in[r]$
and every vector
$X\in\mathbb{F}_2^{h\times1}$,
let
$t_j(X)$
denote the number of codewords in $C$ whose $j$th column is equal to $X$.
Since the RT weight is additive over the columns, we have
\[
\sum_{i,j}d_{RT}(P_i,P_j)
=
\sum_{j=1}^{r}
\sum_{X,Y\in\mathbb{F}_2^{h}}
d_{RT}(X,Y)t_j(X)t_j(Y).
\]

Moreover, for every
$j\in[r]$,
\[
\sum_{X\in\mathbb{F}_2^{h}}t_j(X)=M.
\]
Therefore, Lemma~\ref{Lem2} implies that
\[
\sum_{X,Y\in\mathbb{F}_2^{h}}
d_{RT}(X,Y)t_j(X)t_j(Y)
\le
M^2\left(h+2^{-h}-1\right).
\]

Summing over all column positions yields
\[
\sum_{i,j}d_{RT}(P_i,P_j)
\le
rM^2\left(h+2^{-h}-1\right)
=
rM^2\,2^{-h}\bigl((h-1)2^h+1\bigr).
\]

On the other hand, since
\[
d_{RT}(P_i,P_j)\ge [D]_{ij},
\]
we obtain
\[
\sum_{i,j}[D]_{ij}
\le
\sum_{i,j}d_{RT}(P_i,P_j)
\le
rM^2\,2^{-h}\bigl((h-1)2^h+1\bigr).
\]
Hence,
\[
r
\ge
\frac{2^h}
{M^2\bigl((h-1)2^h+1\bigr)}
\sum_{i,j}[D]_{ij}.
\]

Finally, since $D$ is symmetric and satisfies
$[D]_{ii}=0$,
\[
\sum_{i,j}[D]_{ij}
=
2\sum_{i<j}[D]_{ij}.
\]
Consequently,
\[
N_{RT}^{h}(D)
=
r
\ge
\frac{2^{h+1}}
{M^2\bigl((h-1)2^h+1\bigr)}
\sum_{i<j}[D]_{ij},
\]
which completes the proof.
\end{proof}

\begin{Remark}
Lemma~\ref{Lem3} extends the corresponding lower bound established for function-correcting codes under the Hamming metric. Indeed, when $h=1$ and $M$ is odd, the RT metric coincides with the Hamming metric, and Lemma~\ref{Lem3} reduces exactly to Lemma~1 of~\cite{lbwy}.
\end{Remark}
    
    Lemma~\ref{Lem3} provides a general lower bound on the smallest column length
$N_{RT}^{h}(D)$
for arbitrary nonnegative symmetric matrices. We now turn to the complementary problem of deriving an upper bound. To this end, we first introduce the notion of an RT-distance ball and its volume, which will play a fundamental role in the subsequent Gilbert--Varshamov-type argument.

  For
$
A\in\mathbb{F}_2^{h\times k},
$
the \emph{RT-distance ball} centered at $A$ with radius $d$ is defined by
\[
B_{RT}(h,k,d,A)
=
\left\{
B\in\mathbb{F}_2^{h\times k}
\;\middle|\;
d_{RT}(A,B)\le d
\right\}.
\]

The \emph{volume} of the RT-distance ball is denoted by
$V_{RT}(h,k,d)$
and is given by
\[
V_{RT}(h,k,d)
=
|B_{RT}(h,k,d,A)|
=
\left|
\left\{
B\in\mathbb{F}_2^{h\times k}
\;\middle|\;
d_{RT}(A,B)\le d
\right\}
\right|.
\]

The classical Gilbert--Varshamov construction can be naturally extended to irregular RT-distance codes. This leads to the following general upper bound on the smallest column length $N_{RT}^{h}(D)$.

This dependence is reflected by the appearance of an arbitrary permutation in the following bound.

\begin{Lemma}\label{Lem4}
Let $h$ and $M$ be positive integers, and let
\[
D\in\mathbb{N}_{0}^{M\times M}
\]
be a nonnegative symmetric matrix.
Then, for every permutation
\[
\pi:[M]\rightarrow[M],
\]
we have
\[
N_{RT}^{h}(D)
\le
\min_{r\in\mathbb N}
\left\{
r:
2^{hr}>
\max_{j\in[M]}
\sum_{i=1}^{j-1}
V_{RT}
\!\left(
h,r,[D]_{\pi(i)\pi(j)}-1
\right)
\right\}.
\]
\end{Lemma}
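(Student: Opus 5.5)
We prove Lemma~\ref{Lem4} by a greedy Gilbert--Varshamov-type construction, mirroring the proof of the corresponding result for function-correcting codes in the Hamming metric. Fix a permutation $\pi$ of $[M]$ and let $r$ be any positive integer satisfying
\[
2^{hr}>\max_{j\in[M]}\sum_{i=1}^{j-1}V_{RT}\!\left(h,r,[D]_{\pi(i)\pi(j)}-1\right).
\]
It suffices to exhibit a $D$-RT code in $\mathbb{F}_2^{h\times r}$, since then $N_{RT}^{h}(D)\le r$, and taking the smallest such $r$ gives the stated bound. We adopt the convention that a ball of negative radius is empty, so $V_{RT}(h,r,-1)=0$; this handles the entries with $[D]_{ij}=0$, in particular the diagonal.

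We choose the codewords sequentially in the order $\pi(1),\pi(2),\ldots,\pi(M)$. Set $P_{\pi(1)}$ to be any matrix of $\mathbb{F}_2^{h\times r}$. Assume that $P_{\pi(1)},\ldots,P_{\pi(j-1)}$ have been chosen so that $d_{RT}(P_{\pi(i)},P_{\pi(l)})\ge[D]_{\pi(i)\pi(l)}$ for all $i,l<j$. A matrix $P\in\mathbb{F}_2^{h\times r}$ is a forbidden choice for $P_{\pi(j)}$ exactly when
\[
P\in\bigcup_{i=1}^{j-1}B_{RT}\!\left(h,r,[D]_{\pi(i)\pi(j)}-1,P_{\pi(i)}\right).
\]
The volume of an RT-ball depends only on its radius and not on its center, because $d_{RT}$ is translation invariant: $d_{RT}(A,B)=w_{RT}(A-B)$. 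Hence, by the union bound, the number of forbidden matrices is at most $\sum_{i=1}^{j-1}V_{RT}(h,r,[D]_{\pi(i)\pi(j)}-1)$. By the choice of $r$, this is strictly less than $2^{hr}$. So some admissible $P_{\pi(j)}$ exists, and it satisfies $d_{RT}(P_{\pi(i)},P_{\pi(j)})\ge[D]_{\pi(i)\pi(j)}$ for every $i<j$.

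After $M$ steps, every pair $(\pi(i),\pi(j))$ with $i<j$ meets its constraint. Because $D$ is symmetric and $d_{RT}$ is symmetric, the constraints for the reversed pairs $(\pi(j),\pi(i))$ also hold. The diagonal constraints hold trivially, since $[D]_{ii}=0$ (implicit in the definition of a $D$-RT code). Therefore, under the ordering $P_1,\ldots,P_M$, the set $\{P_1,\ldots,P_M\}$ is a $D$-RT code.

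One subtlety remains. Definition~\ref{Def D-RT code} speaks of a code of cardinality $M$, so the chosen matrices must be distinct. When the relevant entries of $D$ are positive, distinctness is automatic, because $[D]_{\pi(i)\pi(j)}\ge1$ forces $P_{\pi(j)}\ne P_{\pi(i)}$. When $[D]_{\pi(i)\pi(j)}=0$, we instead exclude the single point $P_{\pi(i)}$, which is equivalent to using radius $\max\{[D]_{\pi(i)\pi(j)}-1,0\}$. Alternatively, we can read the definition as allowing repeated redundancy matrices, which is the reading used in Theorem~\ref{Th1}, where distinct information matrices may share the same $P_i$. I expect this bookkeeping about zero entries to be the main, if minor, obstacle; the greedy counting step itself is routine.
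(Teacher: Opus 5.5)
Your greedy Gilbert--Varshamov construction is the paper's proof. The paper relabels so that $\pi$ is the identity and picks $P_j$ outside $\bigcup_{i<j}B_{RT}(h,r,[D]_{ij}-1,P_i)$, and your extra bookkeeping on symmetry, the diagonal and the convention $V_{RT}(h,r,-1)=0$ is sound. On distinctness, only your second resolution works: excluding $P_{\pi(i)}$ when $[D]_{\pi(i)\pi(j)}=0$ replaces $V_{RT}(h,r,-1)=0$ by $V_{RT}(h,r,0)=1$, which the stated hypothesis does not control (for $D=0$ the hypothesis holds for every $r$, so the bound fails for distinct codewords once $M>2^{h}$), hence the lemma must be read, as in Theorem~\ref{Th1}, with repeated codewords allowed.
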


\begin{proof}
By Definition~\ref{Def D-RT code}, the ordering of the codewords is arbitrary. Hence, without loss of generality, we may assume that the permutation $\pi$ is the identity map.

Let
\[
D\in\mathbb{N}_{0}^{M\times M}
\]
be a nonnegative symmetric matrix. We construct the desired $D$-RT code sequentially.

Choose an arbitrary matrix
\[
P_1\in\mathbb{F}_2^{h\times r}.
\]

To select the second codeword, all matrices contained in the RT ball
\[
B_{RT}\!\left(h,r,[D]_{12}-1,P_1\right)
\]
must be avoided. Since this ball contains
\[
V_{RT}(h,r,[D]_{12}-1)
\]
matrices, whenever
\[
2^{hr}
>
V_{RT}(h,r,[D]_{12}-1),
\]
there exists a matrix
\[
P_2\in\mathbb{F}_2^{h\times r}
\]
such that
\[
d_{RT}(P_1,P_2)\ge [D]_{12}.
\]

Suppose next that $P_1,\ldots,P_{j-1}$ have already been chosen so that
\[
d_{RT}(P_i,P_\ell)\ge [D]_{i\ell},
\qquad
1\le i<\ell\le j-1.
\]
To choose $P_j$, it suffices to avoid the union of the RT balls
\[
B_{RT}\!\left(h,r,[D]_{ij}-1,P_i\right),
\qquad
1\le i\le j-1.
\]
Therefore, if
\[
2^{hr}
>
\sum_{i=1}^{j-1}
V_{RT}(h,r,[D]_{ij}-1),
\]
then at least one matrix remains available, and we can choose
\[
P_j\in\mathbb{F}_2^{h\times r}
\]
such that
\[
d_{RT}(P_i,P_j)\ge [D]_{ij},
\qquad
1\le i\le j-1.
\]

Consequently, if
\[
2^{hr}
>
\max_{j\in[M]}
\sum_{i=1}^{j-1}
V_{RT}
\!\left(
h,r,[D]_{\pi(i)\pi(j)}-1
\right),
\]
the above greedy procedure successfully constructs
\[
\{P_1,P_2,\ldots,P_M\},
\]
which forms a $D$-RT code over
$\mathbb{F}_2^{h\times r}$.

Hence,
\[
N_{RT}^{h}(D)
\le
\min_{r\in\mathbb N}
\left\{
r:
2^{hr}>
\max_{j\in[M]}
\sum_{i=1}^{j-1}
V_{RT}
\!\left(
h,r,[D]_{\pi(i)\pi(j)}-1
\right)
\right\},
\]
which completes the proof.
\end{proof}

Since both RT-distance matrices and RT-function-distance matrices are, by Definitions~\ref{Def RT-distance matrix} and~\ref{Def RT-function-distance matrix}, nonnegative symmetric matrices, the general bounds established in Lemmas~\ref{Lem3} and~\ref{Lem4} apply directly to these matrices. Combined with the exact relationships derived in Subsection~3.1, these results immediately translate into general lower and upper bounds on the optimal redundancy
$r_{RT}^{f}(h,k,t)$
for an arbitrary function
$f:\mathbb{F}_2^{h\times k}\rightarrow{\rm Im}(f)$.
The resulting estimates constitute the main theoretical tools for analyzing the particular classes of functions investigated in the subsequent sections.

\section{Bounds of FCRTCs for RT-Weight Functions}
In the previous section, we established general lower and upper bounds on the optimal redundancy of function-correcting RT codes for arbitrary functions. We now specialize these results to the RT-weight function, whose particular algebraic structure allows significantly sharper estimates to be derived.

Throughout this section, let
$
f=w_{RT}
$
denote the RT-weight function over
$\mathbb{F}_2^{h\times k}$.
By Definition~\ref{Def1 RT-weight},
$
{\rm Im}(f)=\{0,1,\ldots,hk\},
$
and therefore
$
|{\rm Im}(f)|=hk+1.
$
For convenience, we set
$
E=hk+1.
$

Our objective is to exploit the structural properties of the RT-weight function to refine the general bounds established in Section~3 and obtain substantially tighter estimates of the optimal redundancy of FCRTCs.
To establish the main results of this section, we first recall the shifted modulo operator introduced in~\cite{lbwy}. For positive integers $a$ and $b$, it is defined by
\begin{equation}
a \text{ smod } b
\triangleq
((a-1)\bmod b)+1
\in\{1,2,\ldots,b\}.
\end{equation}

For example,
\[
3\ \text{smod}\ 3=3,
\qquad
5\ \text{smod}\ 3=2.
\]

    \begin{Theorem}\label{Upper-Lower bounds RT-weight}
Let
\[
f:\mathbb{F}_2^{h\times k}\longrightarrow{\rm Im}(f)
\]
be the RT-weight function defined by
\[
f(A)=w_{RT}(A),
\qquad
A\in\mathbb{F}_2^{h\times k}.
\]
Then
\[
N_{RT}^{h}(D_1)
\le
r_{RT}^{f}(h,k,t)
\le
N_{RT}^{h}(D_2),
\]
where
\[
D_1,D_2\in\mathbb N_0^{E\times E},
\]
and, for every
$i,j\in\{0,1,\ldots,hk\}$,
\[
[D_1]_{(i+1)(j+1)}
=
\begin{cases}
\left[
2t+1-|i-j|-(\min\{i,j\}\bmod h)
\right]^+,
& i\neq j,\\[1ex]
0,
& i=j,
\end{cases}
\]
and
\[
[D_2]_{(i+1)(j+1)}
=
\begin{cases}
\left[
2t+1-|i-j|
\right]^+,
& i\neq j,\\[1ex]
0,
& i=j.
\end{cases}
\]
\end{Theorem}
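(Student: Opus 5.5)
The plan is to compare both bounds with results already proved. For the upper bound I will use Theorem~\ref{Th2}. For the lower bound I will use the first part of Corollary~\ref{Cor1}, applied to an explicit family of representatives $A_0,\dots,A_{hk}$ with $w_{RT}(A_i)=i$. The one general fact I need is that $N_{RT}^h$ is monotone in $D$: if $[D]_{ij}\le[D']_{ij}$ entrywise, then every $D'$-RT code is also a $D$-RT code, so $N_{RT}^h(D)\le N_{RT}^h(D')$. This is immediate from Definition~\ref{Def D-RT code}.

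\emph{Upper bound.} Since $d_{RT}$ is a metric and $w_{RT}(A)=d_{RT}(A,0)$, the triangle inequality gives
\[
|w_{RT}(A)-w_{RT}(B)|\le d_{RT}(A,B).
\]
Take any $A\in f^{-1}(i)$ and $B\in f^{-1}(j)$. Then $d_{RT}(A,B)\ge |i-j|$, so $d_{RT}^f(i,j)\ge|i-j|$. Hence every entry of $D_{RT}^f(h,k,t,0,\dots,hk)$ is at most the corresponding entry of $D_2$. Theorem~\ref{Th2} combined with monotonicity then gives $r_{RT}^f(h,k,t)\le N_{RT}^h(D_2)$.

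\emph{Lower bound.} By Corollary~\ref{Cor1} and monotonicity, it suffices to find matrices $A_0,\dots,A_{hk}$ such that $w_{RT}(A_i)=i$ and, for $i<j$,
\[
d_{RT}(A_i,A_j)\le (j-i)+(i\bmod h).
\]
With such a family, each entry of $D_{RT}^f(h,k,t,A_0,\dots,A_{hk})$ is at least the corresponding entry of $D_1$.

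The construction I propose is as follows. Write $i=qh+s$ with $0\le s<h$. Let $A_i$ have its first $q$ columns equal to the all-ones vector, which has RT-weight $h$. If $s>0$, let column $q+1$ be the unit vector $e_{h-s+1}$, which has RT-weight $s$. All remaining columns are zero. This gives $w_{RT}(A_i)=i$.

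Now take $i<j$ and write $j=q'h+s'$. I will check the distance in two cases.
\begin{itemize}
\item If $q=q'$, the matrices differ only in column $q+1$. The difference $e_{h-s+1}-e_{h-s'+1}$ has its topmost nonzero entry in row $h-s'+1$, so its RT-weight is $s'=(j-i)+s$.
\item If $q<q'$, there are three contributions.
\begin{itemize}
\item Column $q+1$ compares $e_{h-s+1}$ (or $0$ when $s=0$) with the all-ones vector. The difference has a nonzero entry in row $1$, so it contributes $h$.
\item Columns $q+2,\dots,q'$ each contribute $h$.
\item Column $q'+1$ contributes $s'$.
\end{itemize}
The total is $h(q'-q)+s'=(j-i)+s$.
\end{itemize}
In both cases $d_{RT}(A_i,A_j)=|i-j|+(\min\{i,j\}\bmod h)$, so the $D_{RT}^f$ entry equals the $D_1$ entry exactly.

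I expect the main obstacle to be finding this representative family, in particular handling the partially filled column so that the distance penalty is exactly $\min\{i,j\}\bmod h$. The RT-weight of a difference is governed by its topmost nonzero entry, so the column-by-column verification must be done carefully, including the boundary case $s=0$.
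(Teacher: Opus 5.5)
Your proof is correct and follows the paper's route. The upper bound comes from Theorem~\ref{Th2} together with $|w_{RT}(A)-w_{RT}(B)|\le d_{RT}(A,B)$, and the lower bound from Corollary~\ref{Cor1} applied to representatives $A_0,\dots,A_{hk}$ with $d_{RT}(A_i,A_j)=|i-j|+(\min\{i,j\}\bmod h)$; the differences are cosmetic, since the paper fills the first $\lceil i/h\rceil-1$ columns with $e_1$ rather than the all-ones vector (both have RT-weight $h$) and checks the distance in five residue cases, which your two-case split on quotients covers more compactly.
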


\begin{Remark}
Theorem~\ref{Upper-Lower bounds RT-weight} replaces the abstract matrices in the general bounds of Section~3 with two explicit matrices that depend only on the RT weights. Consequently, estimating the optimal redundancy for the RT-weight function is reduced to determining the smallest column lengths of the explicitly defined $D_1$-RT and $D_2$-RT codes.
\end{Remark}

 \begin{proof}
Let
$
E=hk+1,
$
and write
$
{\rm Im}(f)=\{f_1,f_2,\ldots,f_E\},
$
where
$
f_i=i-1,
\,
i\in[E].
$

We first establish the upper bound.

Recall that
$
d_{RT}(f_i,f_j)
=
\min
\left\{
d_{RT}(A,B)
\,\middle|\,
w_{RT}(A)=f_i,\,
w_{RT}(B)=f_j
\right\}.
$

Let
$
A,B\in\mathbb F_2^{h\times k}
$
satisfy
$
w_{RT}(A)=f_i=i-1,
\,
w_{RT}(B)=f_j=j-1.
$
Since the RT distance satisfies
$
d_{RT}(A,B)
\ge
|w_{RT}(A)-w_{RT}(B)|,
$
we obtain
$
d_{RT}(A,B)
\ge
|i-j|.
$
Taking the minimum over all such pairs $(A,B)$ yields
$
d_{RT}(f_i,f_j)
\ge
|i-j|.
$

Therefore, by Definition~\ref{Def RT-function-distance matrix},
$
[D_{RT}^{f}(h,k,t,f_1,\ldots,f_E)]_{ij}
\le
[D_2]_{ij},
\qquad
\forall\,i,j\in[E].
$

Now let
$
\mathcal P=
\{P_1,P_2,\ldots,P_E\}
$
be a
$D_2$-RT code over
$
\mathbb F_2^{h\times N_{RT}^{h}(D_2)}.
$
By Definition~\ref{Def D-RT code}, we may assume that
$
d_{RT}(P_i,P_j)
\ge
[D_2]_{ij},
\,
\forall\,i,j\in[E].
$
Consequently,
\[
d_{RT}(P_i,P_j)
\ge
[D_{RT}^{f}(h,k,t,f_1,\ldots,f_E)]_{ij},
\]
which shows that
$
\mathcal P
$
is also a
$
D_{RT}^{f}(h,k,t,f_1,\ldots,f_E)
\text{-RT code}.
$
Hence,
\[
N_{RT}^{h}
\left(
D_{RT}^{f}(h,k,t,f_1,\ldots,f_E)
\right)
\le
N_{RT}^{h}(D_2).
\]

Finally, Theorem~\ref{Th2} yields
$
r_{RT}^{f}(h,k,t)
\le
N_{RT}^{h}
\!\left(
D_{RT}^{f}(h,k,t,f_1,\ldots,f_E)
\right)
\le
N_{RT}^{h}(D_2).
$

The RT weight satisfies the reverse triangle inequality
$
|w_{RT}(A)-w_{RT}(B)|
\le
d_{RT}(A,B),
$
which immediately implies that $
d_{RT}(A,B)
\ge
|w_{RT}(A)-w_{RT}(B)|.
$

 We now establish the lower bound by constructing an explicit family of representative information matrices. More precisely, we select one matrix for each possible RT weight and then apply Corollary~\ref{Cor1}. This construction enables us to derive an explicit lower bound through the associated RT-distance matrix.

Let
\[
A_0=0_{h\times k}
\]
be the zero matrix in
$\mathbb{F}_2^{h\times k}$.
For
\[
1\le i\le hk,
\]
define
\[
A_i=
\begin{pmatrix}
\overset{\big\lceil\frac{i}{h}\big\rceil-1}{\overbrace{1\ \cdots\ 1}}
&0&0&\cdots&0\\
0\ \cdots\ 0&0&0&\cdots&0\\
\vdots&\vdots&\vdots&\ddots&\vdots\\
0\ \cdots\ 0&1&0&\cdots&0\\
\vdots&\vdots&\vdots&\ddots&\vdots\\
0\ \cdots\ 0&0&0&\cdots&0
\end{pmatrix}
\in
\mathbb{F}_2^{h\times k},
\]
where the unique nonzero entry in the
$\left(h-(i\ \mathrm{smod}\ h)+1\right)$-th row
of the
$\left\lceil\frac{i}{h}\right\rceil$-th column
is equal to~$1$.

By construction,
\[
w_{RT}(A_i)=i,
\qquad
0\le i\le hk.
\]
Hence,
\[
\{w_{RT}(A_0),w_{RT}(A_1),\ldots,w_{RT}(A_{hk})\}
={\rm Im}(f).
\]

Therefore, Corollary~\ref{Cor1} implies that
\[
r_{RT}^{f}(h,k,t)
\ge
N_{RT}^{h}
\!\left(
D_{RT}^{f}(h,k,t,A_0,A_1,\ldots,A_{hk})
\right).
\]

To determine the entries of the corresponding RT-distance matrix, let
$i,j\in{\rm Im}(f)=\{0,1,\ldots,kh\}$
with
$i\neq j$.
Write
\[
i=mh+l,
\qquad
j=ah+b,
\]
where
\[
0\le l,b\le h-1.
\]
Without loss of generality, assume that
$i>j$.

The proof naturally splits into the following five cases, according to the relative values of $l$ and $b$.

        \begin{center}
            \begin{tabular}{|p{2cm}|p{8cm}|}
            \hline
            \centering\arraybackslash
                Case 1 & \centering\arraybackslash$l=b=0$  \\
            \hline
                 \centering\arraybackslash Case 2 & \centering\arraybackslash$l=0,\,b\neq 0$\\
            \hline
                 \centering\arraybackslash Case 3 & \centering\arraybackslash$l\neq 0,\,b=0$\\
            \hline
                \centering\arraybackslash Case 4 & \centering\arraybackslash$l=b\neq0$ \\
            \hline
                \centering\arraybackslash Case 5 & \centering\arraybackslash$l\neq b,\,l\neq 0,\,b\neq0$\\
            \hline
            \end{tabular}
        \end{center}
        Next, we will analyze the value of $d_{RT}(A_i,A_j)$ in each case.
        
        {\bf Case 1.} If $l=b=0$, then $m>a$, $\big\lceil{\frac{i}{h}}\big\rceil=m$, $\big\lceil{\frac{j}{h}}\big\rceil=a$, $h-(i \text{ smod } h)+1=1$, $h-(j \text{ smod } h)+1=1$, thus $A_i=\begin{pmatrix}
            \overset{m}{\overbrace{1  \cdots 1}} & 0 & 0 &\cdots & 0\\
            0 \cdots 0 & 0 & 0 & \cdots & 0\\
            \vdots \ddots \vdots & \vdots & \vdots& \ddots& \vdots\\
            0\cdots 0 & 0 & 0 & \cdots & 0\\
        \end{pmatrix},\,
        A_j=\begin{pmatrix}
            \overset{a}{\overbrace{1  \cdots 1}} & 0 & 0 &\cdots & 0\\
            0 \cdots 0 & 0 & 0 & \cdots & 0\\
            \vdots \ddots \vdots & \vdots & \vdots& \ddots& \vdots\\
            0\cdots 0 & 0 & 0 & \cdots & 0\\
        \end{pmatrix}$, $d_{RT}(A_i,A_j)=(m-a)h=i-j+(\min\{\,i,j\,\}\mod h)$.

        {\bf Case 2.} If $l=0,\,b\neq 0$, then $m>a$, similarly, $A_i=\begin{pmatrix}
            \overset{m}{\overbrace{1  \cdots 1}} & 0 & 0 &\cdots & 0\\
            0 \cdots 0 & 0 & 0 & \cdots & 0\\
            \vdots \ddots \vdots & \vdots & \vdots& \ddots& \vdots\\
            0\cdots 0 & 0 & 0 & \cdots & 0\\
        \end{pmatrix}$, and ${\big\lceil{\frac{j}{h}}\big\rceil=a+1}$, ${h-(j \text{ smod } h)+1=h-b+1}$, thus 
        $
        A_j=\begin{pmatrix}
            \overset{a}{\overbrace{1  \cdots 1}} & 0 & 0 &\cdots & 0\\
            0 \cdots 0 & 0 & 0 & \cdots & 0\\
            \vdots \ddots \vdots & \vdots & \vdots& \ddots& \vdots\\
            0 \cdots 0 & 1& 0& \cdots & 0\\
            \vdots \ddots \vdots & \vdots & \vdots& \ddots& \vdots\\
            0\cdots 0 & 0 & 0 & \cdots & 0\\
        \end{pmatrix}$. 
        $$ d_{RT}(A_i,A_j)=(m-a)h=i-j+b=i-j+(\min\{\,i,j\,\}\mod h). $$

        {\bf Case 3.} If $l\neq 0,\,b=0$, then $m\geq a$. $\big\lceil{\frac{i}{h}}\big\rceil=m+1$, $h-(i \text{ smod } h)+1=h-l+1$, $A_i=\begin{pmatrix}
            \overset{m}{\overbrace{1  \cdots 1}} & 0 & 0 &\cdots & 0\\
            0 \cdots 0 & 0 & 0 & \cdots & 0\\
            \vdots \ddots \vdots & \vdots & \vdots& \ddots& \vdots\\
            0 \cdots 0 & 1& 0& \cdots & 0\\
            \vdots \ddots \vdots & \vdots & \vdots& \ddots& \vdots\\
            0\cdots 0 & 0 & 0 & \cdots & 0\\
        \end{pmatrix}$.
        In this case, $\big\lceil{\frac{j}{h}}\big\rceil=a$, $h-(j \text{ smod } h)+1=1$, then, $A_j=\begin{pmatrix}
            \overset{a}{\overbrace{1  \cdots 1}} & 0 & 0 &\cdots & 0\\
            0 \cdots 0 & 0 & 0 & \cdots & 0\\
            \vdots \ddots \vdots & \vdots & \vdots& \ddots& \vdots\\
            0\cdots 0 & 0 & 0 & \cdots & 0\\
        \end{pmatrix}$. If $m=a$, $d_{RT}(A_i,A_j)=l=i-j+(\min\{\,i,j\,\}\mod h)$. If $m>a$, $d_{RT}(A_i,A_j)=(m-a)h+l=i-j+b=i-j+(\min\{\,i,j\,\}\mod h)$.

        {\bf Case 4.} If $l=b\neq 0$, then $m>a$. $\big\lceil{\frac{i}{h}}\big\rceil=m+1$, $h-(i \text{ smod } h)+1=h-l+1$, thus, $A_i=\begin{pmatrix}
            \overset{m}{\overbrace{1  \cdots 1}} & 0 & 0 &\cdots & 0\\
            0 \cdots 0 & 0 & 0 & \cdots & 0\\
            \vdots \ddots \vdots & \vdots & \vdots& \ddots& \vdots\\
            0 \cdots 0 & 1& 0& \cdots & 0\\
            \vdots \ddots \vdots & \vdots & \vdots& \ddots& \vdots\\
            0\cdots 0 & 0 & 0 & \cdots & 0\\
        \end{pmatrix}$. ${\big\lceil{\frac{j}{h}}\big\rceil=a+1}$, ${h-(j \text{ smod } h)+1=h-b+1}$, thus 
        $
        A_j=\begin{pmatrix}
            \overset{a}{\overbrace{1  \cdots 1}} & 0 & 0 &\cdots & 0\\
            0 \cdots 0 & 0 & 0 & \cdots & 0\\
            \vdots \ddots \vdots & \vdots & \vdots& \ddots& \vdots\\
            0 \cdots 0 & 1& 0& \cdots & 0\\
            \vdots \ddots \vdots & \vdots & \vdots& \ddots& \vdots\\
            0\cdots 0 & 0 & 0 & \cdots & 0\\
        \end{pmatrix}$. By Definition \ref{def RT-distance}, $${d_{RT}(A_i,A_j)=(m-a)h+l=i-j+b=i-j+(\min\{\,i,j\,\}\mod h}).$$

        {\bf Case 5.} If $l\neq b,\,l\neq0,\,b\neq 0$. $\big\lceil{\frac{i}{h}}\big\rceil=m+1$, $h-(i \text{ smod } h)+1=h-l+1$, $A_i=\begin{pmatrix}
            \overset{m}{\overbrace{1  \cdots 1}} & 0 & 0 &\cdots & 0\\
            0 \cdots 0 & 0 & 0 & \cdots & 0\\
            \vdots \ddots \vdots & \vdots & \vdots& \ddots& \vdots\\
            0 \cdots 0 & 1& 0& \cdots & 0\\
            \vdots \ddots \vdots & \vdots & \vdots& \ddots& \vdots\\
            0\cdots 0 & 0 & 0 & \cdots & 0\\
        \end{pmatrix}$. Similarly,
        $
        A_j=\begin{pmatrix}
            \overset{a}{\overbrace{1  \cdots 1}} & 0 & 0 &\cdots & 0\\
            0 \cdots 0 & 0 & 0 & \cdots & 0\\
            \vdots \ddots \vdots & \vdots & \vdots& \ddots& \vdots\\
            0 \cdots 0 & 1& 0& \cdots & 0\\
            \vdots \ddots \vdots & \vdots & \vdots& \ddots& \vdots\\
            0\cdots 0 & 0 & 0 & \cdots & 0\\
        \end{pmatrix}$. If $m=a$, we have $l>b$, then ${d_{RT}(A_i,A_j)=l=i-j+(\min\{\,i,j\,\}\mod h)}$. If $m>a$, then ${d_{RT}(A_i,A_j)=(m-a)h+l=i-j+b=i-j+(\min\{\,i,j\,\}\mod h)}$.

Consequently, for every
\[
i,j\in\{0,1,\ldots,kh\},
\qquad
i\neq j,
\]
we obtain
\[
d_{RT}(A_i,A_j)
=
|i-j|
+
(\min\{i,j\}\bmod h).
\]

Therefore, by Definition~\ref{Def RT-distance matrix},
\[
[D_{RT}^{f}(h,k,t,A_0,A_1,\ldots,A_{kh})]_{(i+1)(j+1)}
=
\begin{cases}
\left[
2t+1-d_{RT}(A_i,A_j)
\right]^+,
&
i\neq j,
\\[1ex]
0,
&
i=j.
\end{cases}
\]

Substituting the above expression for
$d_{RT}(A_i,A_j)$
gives
\[
[D_{RT}^{f}(h,k,t,A_0,\ldots,A_{kh})]_{(i+1)(j+1)}
=
\begin{cases}
\left[
2t+1-|i-j|-(\min\{i,j\}\bmod h)
\right]^+,
&
i\neq j,
\\[1ex]
0,
&
i=j.
\end{cases}
\]

Hence,
\[
D_{RT}^{f}(h,k,t,A_0,A_1,\ldots,A_{kh})
=
D_1.
\]
Applying Corollary~\ref{Cor1}, we conclude that
\[
r_{RT}^{f}(h,k,t)
\ge
N_{RT}^{h}(D_1).
\]

Combining this inequality with the upper bound established previously yields
\[
N_{RT}^{h}(D_1)
\le
r_{RT}^{f}(h,k,t)
\le
N_{RT}^{h}(D_2),
\]
thereby completing the proof.
\end{proof}

\begin{Remark}
The case $h=1$ deserves particular attention. Since
\[
\min\{i,j\}\bmod h=0,
\qquad
\forall\,i,j\in[E],
\]
we obtain
\[
D_1=D_2,
\]
and consequently
\[
N_{RT}^{h}(D_1)=N_{RT}^{h}(D_2).
\]
Therefore, the lower and upper bounds in Theorem~\ref{Upper-Lower bounds RT-weight} coincide, providing the exact value of the optimal redundancy. Moreover, because the RT metric coincides with the Hamming metric when $h=1$, Theorem~\ref{Upper-Lower bounds RT-weight} recovers exactly Lemma~6 of~\cite{lbwy}.
\end{Remark}

    \begin{Example}
        Let $h=2,\,k=2,\,t=1$. Let $w_{RT}$ be the RT-weight function over $\mathbb{F}_2^{2\times 2}$. Then, the matrices $D_1$ and $D_2$ in Theorem \ref{Upper-Lower bounds RT-weight} are 
        $$ D_1=\begin{pmatrix}
            0 & 2 & 1 & 0 & 0\\
            2 & 0 & 1 & 0 & 0\\
            1 & 1 & 0 & 2 & 1\\
            0 & 0 & 2 & 0 & 1\\
            0 & 0 & 1 & 1 & 0\\
        \end{pmatrix}, ~D_2=\begin{pmatrix}
            0 & 2 & 1 & 0 & 0\\
            2 & 0 & 2 & 1 & 0\\
            1 & 2 & 0 & 2 & 1\\
            0 & 1 & 2 & 0 & 2\\
            0 & 0 & 1 & 2 & 0\\
        \end{pmatrix}. $$ By computation, $N_{RT}^h(D_1)=N_{RT}^h(D_2)=2$, thus, $r_{RT}^{w_{RT}}(2,2,1)=2$. 
    \end{Example}

The bound established in Theorem~\ref{Upper-Lower bounds RT-weight} still involves the quantity
$N_{RT}^{h}(D_1),$ whose exact evaluation may be difficult in general. By combining the explicit structure of the matrix $D_1$ with the lower bound obtained in Lemma~\ref{Lem3}, we derive the following closed-form estimate for the optimal redundancy of FCRTCs associated with the RT-weight function.

\begin{Corollary}
Let
\[
f:\mathbb{F}_2^{h\times k}\longrightarrow{\rm Im}(f)
\]
be the RT-weight function. If
\[
k>\frac{t}{h},
\]
then
\[
r_{RT}^{f}(h,k,t)
\geq
\frac{2^{h+1}}
{(t+2)^2\bigl((h-1)2^h+1\bigr)}
\bigg(
\frac{5t^3+15t^2+10t}{6}
-\phi_1(h,t)
-\phi_2(h,t)
\bigg),
\]
where \[\phi_1(h,t)=\frac{h(h-1)\lfloor\frac{t}{h}\rfloor}{2}
\cdot \frac{(3\lfloor\frac{t}{h}\rfloor-1)h+6(t-h\lfloor\frac{t}{h}\rfloor)+8}{6},\]
and \[\phi_2(h,t)=\frac{(t-h\lfloor\frac{t}{h}\rfloor)(t-h\lfloor\frac{t}{h}\rfloor+1)(t-h\lfloor\frac{t}{h}\rfloor+2)}{6}.\]
\end{Corollary}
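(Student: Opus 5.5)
The plan is to apply Lemma~\ref{Lem3} to a well-chosen principal submatrix of $D_1$ from Theorem~\ref{Upper-Lower bounds RT-weight}. By the proof of that theorem, $D_1=D_{RT}^{f}(h,k,t,A_0,\ldots,A_{hk})$, where $A_i$ is the explicit matrix of RT-weight $i$. The hypothesis $k>t/h$ means $hk\ge t+1$, so $A_0,A_1,\ldots,A_{t+1}$ are available. By the first part of Corollary~\ref{Cor1},
\[
r_{RT}^{f}(h,k,t)\ge N_{RT}^{h}(D'),
\]
where $D'$ is the $(t+2)\times(t+2)$ leading principal submatrix of $D_1$, with entries
\[
[D']_{(i+1)(j+1)}=\bigl[2t+1-|i-j|-(\min\{i,j\}\bmod h)\bigr]^+ ,\qquad 0\le i,j\le t+1 .
\]
This matrix is symmetric with zero diagonal, so Lemma~\ref{Lem3} with $M=t+2$ gives the prefactor $\frac{2^{h+1}}{(t+2)^2((h-1)2^h+1)}$. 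It then remains to show that $\sum_{0\le i<j\le t+1}[D']_{(i+1)(j+1)}$ is at least the bracketed expression.

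To bound the sum, use $[x]^+\ge x$ and drop the truncation; this is harmless for a lower bound and avoids deciding which entries vanish when $h$ is large. The sum then splits into two parts:
\[
\sum_{0\le i<j\le t+1}(2t+1-(j-i))\;-\;\sum_{0\le i<j\le t+1}(i\bmod h).
\]
For the first part, group the pairs by $d=j-i\in\{1,\ldots,t+1\}$. There are $t+2-d$ pairs with difference $d$, so it equals $\sum_{d=1}^{t+1}(t+2-d)(2t+1-d)$. A direct evaluation should give $\frac{5t^3+15t^2+10t}{6}$; the case $t=1$ gives $4+1=5$, as expected. For the second part, each $i\in\{0,\ldots,t\}$ is paired with $t+1-i$ values of $j$, so it equals $\sum_{i=0}^{t}(i\bmod h)(t+1-i)$.

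The main work is evaluating this last sum in closed form. Write $t=qh+s$ with $q=\lfloor t/h\rfloor$ and $0\le s\le h-1$, and split the range of $i$ into two pieces.

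The full periods are $i=ph+c$ with $0\le p\le q-1$ and $0\le c\le h-1$. Here the summand is $c\,(t+1-ph-c)$. Summing first over $c$, using $\sum c=h(h-1)/2$ and $\sum c^2=(h-1)h(2h-1)/6$, and then over $p$, should produce exactly $\phi_1(h,t)$. This is the bookkeeping step where the algebra has to be checked carefully against the stated form of $\phi_1$, since the terms $(3q-1)h+6s+8$ have to emerge from combining the two power sums.

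The tail is $i=qh+c$ with $0\le c\le s$. Here $i\bmod h=c$ and $t+1-i=s+1-c$, so the contribution is $\sum_{c=0}^{s}c(s+1-c)=\frac{s(s+1)(s+2)}{6}=\phi_2(h,t)$.

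Substituting the three evaluations into Lemma~\ref{Lem3} then yields the stated bound.
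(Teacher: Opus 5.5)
Your proposal is correct and follows the paper's own proof: take the leading $(t+2)\times(t+2)$ principal submatrix of $D_1$ (available because $hk\ge t+1$), apply Lemma~\ref{Lem3} with $M=t+2$, and evaluate the sum without the truncation. In fact the truncation is never active there, since $(j-i)+(i\bmod h)\le j\le t+1$ makes every entry at least $t$. The evaluations you left hedged do check out: the first sum equals $\frac{5t(t+1)(t+2)}{6}$, and the full-period sum equals $\frac{h(h-1)q}{2}\cdot\frac{6t+8-3hq-h}{6}$, which becomes $\phi_1(h,t)$ after substituting $t=qh+s$.
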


  \begin{proof}
By Theorem~\ref{Upper-Lower bounds RT-weight},
\[
r_{RT}^f(h,k,t)\geq N_{RT}^h(D_1),
\]
where the matrix $D_1$ is defined by
\[
[D_1]_{ij}
=
\begin{cases}
\bigl[2t+1-|i-j|-(\min\{i-1,j-1\}\bmod h)\bigr]^+,
& \text{if } i\neq j,\\
0,
& \text{otherwise},
\end{cases}
\]
for $1\leq i,j\leq kh+1$.

Let
\[
\mathcal P=\{P_1,P_2,\ldots,P_{kh+1}\}
\]
be a $D_1$-RT code over
$\mathbb F_2^{h\times N_{RT}^h(D_1)}$.
Without loss of generality, assume that
\[
d_{RT}(P_i,P_j)\geq [D_1]_{ij},
\qquad
1\leq i,j\leq kh+1.
\]

Let $B$ be the submatrix formed  by the first $t+2$ rows and $t+2$ columns of the matrix $D_1$. Consider the subcode
\[
\mathcal P'
=
\{P_1,P_2,\ldots,P_{t+2}\}.
\]
It follows that
\[
d_{RT}(P_i,P_j)
\geq
[D_1]_{ij}
=
[B]_{ij},
\qquad
1\leq i,j\leq t+2.
\]
Hence, by Definition~\ref{Def D-RT code},
$\mathcal P'$ is a $B$-RT code over
$\mathbb F_2^{h\times N_{RT}^h(D_1)}$.
Consequently,
\[
N_{RT}^h(B)
\leq
N_{RT}^h(D_1)
\leq
r_{RT}^f(h,k,t).
\]

Applying Lemma~\ref{Lem3}, we obtain
\[
N_{RT}^h(B)
\geq
\frac{2^{h+1}}
{(t+2)^2\bigl((h-1)2^h+1\bigr)}
\sum_{1\leq i<j\leq t+2}
[B]_{ij}.
\]
Therefore,
\[
r_{RT}^f(h,k,t)
\geq
\frac{2^{h+1}}
{(t+2)^2\bigl((h-1)2^h+1\bigr)}
\sum_{1\leq i<j\leq t+2}
(2t+1-|i-j|-(\min\{i-1,j-1\}\bmod h)\bigr).
\]
Evaluating the above summation yields
\[
\begin{aligned}
r_{RT}^{f}(h,k,t)
&\geq
\frac{2^{h+1}}
{(t+2)^2\bigl((h-1)2^h+1\bigr)}
\bigg(
\frac{5t^3+15t^2+10t}{6}
-\frac{h(h-1)\lfloor\frac{t}{h}\rfloor}{2} \\
&\quad
\cdot \frac{(3\lfloor\frac{t}{h}\rfloor-1)h+6(t-h\lfloor\frac{t}{h}\rfloor)+8}{6}
-\frac{(t-h\lfloor\frac{t}{h}\rfloor)(t-h\lfloor\frac{t}{h}\rfloor+1)(t-h\lfloor\frac{t}{h}\rfloor+2)}{6}
\bigg),
\end{aligned}
\]
which completes the proof.
\end{proof}

\section{FCRTCs for RT-Weight Distribution Functions}

In the previous section, we investigated FCRTCs associated with the RT-weight function and derived explicit bounds on their optimal redundancy. We now consider a broader and more informative class of functions, namely RT-weight distribution functions. Unlike the RT-weight, which records only the total RT weight of a matrix, an RT-weight distribution function captures the complete distribution of RT weights among the columns. Consequently, it preserves significantly more structural information and leads to a richer class of function-correcting RT codes.

The main objective of this section is twofold. First, we construct explicit FCRTCs for RT-weight distribution functions. Second, we determine the optimal redundancy in several important cases, thereby extending the results obtained for the RT-weight function.

Throughout this section, let $T$ be a positive integer, and let $w_{RT}$denote the RT-weight function over
$\mathbb{F}_2^{h\times k}$.

We now introduce the class of RT-weight distribution functions considered in this paper. These functions partition the set of matrices by RT weight intervals rather than by their exact RT weights. Consequently, matrices whose RT weights fall within the same interval are assigned the same function value, yielding a coarser classification than the RT-weight function itself.

 \begin{Definition}\label{def RT-weight distribution}
Let $h$ and $k$ be positive integers, and let $T$ be a positive integer. The \emph{RT-weight distribution function}
$\Delta_{RT}^{h,T}$ over
$\mathbb{F}_2^{h\times k}
$
is defined by
\[
\Delta_{RT}^{h,T}:
\mathbb{F}_2^{h\times k}
\longrightarrow
{\rm Im}(\Delta_{RT}^{h,T}),
\qquad
A
\longmapsto
\left\lfloor
\frac{w_{RT}(A)}{T}
\right\rfloor.
\]
\end{Definition}

We emphasize that the RT-weight distribution function provides a coarser classification than the RT-weight function. Indeed, matrices whose RT weights belong to the same interval of length $T$ are assigned the same function value. Consequently, the parameter $T$ controls the granularity of the partition of $\mathbb{F}_2^{h\times k}$ induced by the function

Since
\[
{\rm Im}(w_{RT})
=
\{0,1,\ldots,kh\},
\]
Definition~\ref{def RT-weight distribution} implies that
\[
{\rm Im}(\Delta_{RT}^{h,T})
=
\left\{
0,1,\ldots,
\left\lfloor\frac{kh}{T}\right\rfloor
\right\}.
\]

To simplify the presentation of the subsequent results, we shall assume throughout this section that
\[
T\mid(kh+1).
\]
Consequently,
\[
\left|{\rm Im}(\Delta_{RT}^{h,T})\right|
=
\frac{kh+1}{T}.
\]
We shall therefore write
\[
E
=
\left|{\rm Im}(\Delta_{RT}^{h,T})\right|
=
\frac{kh+1}{T}.
\]

We now present an explicit construction of function-correcting RT codes for RT-weight distribution functions. The proposed construction exploits the function's particular structure and yields redundancy optimal over a range of parameters.

    \begin{Theorem}\label{Con FCRTC for RT-weight distribution}
        Let $h,k$ be positive integers and $T\geq 2ht+1$. Define an encoding function $$Enc:\mathbb{F}_2^{h\times k}\rightarrow\mathbb{F}_2^{h\times \big(k+\big\lceil{\frac{2t}{h}}\big\rceil\big)}, ~A\mapsto(A,P_{(w_{RT}(A)+1) \text{ smod }T}),$$
        where $P_i\in \mathbb{F}_2^{h\times \big\lceil{\frac{2t}{h}}\big\rceil} $ for $i \in \{\,1,2,\ldots,T\,\}$, and $$P_i=\begin{pmatrix}
            \overset{i-1}{\overbrace{1  \cdots 1}} & \overset{\big\lceil{\frac{2t}{h}}\big\rceil-i+1}{\overbrace{0  \cdots 0}}\\
            0 \cdots 0 & 0 \cdots 0\\
            \vdots \ddots \vdots & \vdots \ddots \vdots\\
            0 \cdots 0 & 0 \cdots 0\\
        \end{pmatrix} \text{ for } i\in \bigg\{\,1,2,\ldots,\bigg\lceil{\frac{2t}{h}}\bigg\rceil+1\bigg\},$$ for an integer $2\leq j\leq h$, if $i\in \big\{\,(j-1)\big\lceil{\frac{2t}{h}}\big\rceil+2,(j-1)\big\lceil{\frac{2t}{h}}\big\rceil+3,\ldots,j\big\lceil{\frac{2t}{h}}\big\rceil+1\big\}$,
        \noindent\begin{minipage}{\linewidth}
        \[
        P_i=\begin{pmatrix}
            \overset{i-(j-1)\big\lceil{\frac{2t}{h}}\big\rceil-1}{\overbrace{0  \cdots 0}} & \overset{j\big\lceil{\frac{2t}{h}}\big\rceil-i+1}{\overbrace{1  \cdots 1}}\\
            \vdots \ddots \vdots & \vdots \ddots \vdots\\
            0 \cdots 0 & 0 \cdots 0\\
            1 \cdots 1 & 0 \cdots 0 \tikzmark{row4}\\
            0 \cdots 0 & 0 \cdots 0\\
            \vdots \ddots \vdots & \vdots \ddots \vdots\\
            0 \cdots 0 & 0 \cdots 0\\
        \end{pmatrix},
        \]
        \begin{tikzpicture}[overlay, remember picture]
            \draw[->, black, thick]
                ([xshift=1.5cm, yshift=0.6ex]pic cs:row4) --
                ([xshift=0.7cm, yshift=0.6ex]pic cs:row4);

            \node[
                anchor=west,
                inner sep=0pt,
                outer sep=0pt,
                xshift=1.7cm,
                yshift=0.6ex,
                text=black
            ] at (pic cs:row4) {the $j$-th row};
        \end{tikzpicture}
        \end{minipage}
        and 
        $$ P_i=\begin{pmatrix}
            \overset{\big\lceil{\frac{2t}{h}}\big\rceil}{\overbrace{1  \cdots 1}}\\
            1\cdots 1\\
            \vdots \ddots\vdots\\
            1 \cdots 1\\
        \end{pmatrix}, \text{ for } ~i \in\bigg\{\,h\bigg\lceil{\frac{2t}{h}}\bigg\rceil+2,h\bigg\lceil{\frac{2t}{h}}\bigg\rceil+3,\ldots,T\,\bigg\}. $$
        Then $\big\{\,Enc(A)\,|\,A\in \mathbb{F}_2^{h\times k}\,\big\}$ is an {\rm FCRTC} for the {\rm RT}-weight distribution function over $\mathbb{F}_2^{h\times k}$. If $T\geq 2ht+1$, $r_{RT}^{\Delta_{RT}^{h,T}}(h,k,t)\leq \big\lceil{\frac{2t}{h}}\big\rceil$. Moreover, if $2ht+1\leq T\leq \frac{kh+1}{2^{(h-1)k}+1}$, $r_{RT}^{\Delta_{RT}^{h,T}}(h,k,t)= \big\lceil{\frac{2t}{h}}\big\rceil$.
    \end{Theorem}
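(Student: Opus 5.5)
The plan is to verify the FCRTC property directly from Definition~\ref{Def FCRTCs}, and then obtain the equality case from Corollary~\ref{Cor1}. Write $c=\lceil 2t/h\rceil$.

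First I would record the structure of the family $P_1,\dots,P_T$. By direct computation:
\begin{itemize}
\item $P_1=0$ and $P_i$ for $i\ge hc+2$ equals the all-ones matrix $J$, with $w_{RT}(J)=hc\ge 2t$.
\item For $2\le i\le hc+1$, the columns of $P_i$ are $0$, $e_1$ or $e_j$, whose RT-weights are $0$, $h$ and $h-j+1$.
\end{itemize}
Hence $P_1,P_2,\dots,P_{hc+1}$ is a path from $0$ towards $J$. The key auxiliary lemma I would prove is:
\[
d_{RT}(P_i,P_{i'})\ge |i-i'|, \qquad d_{RT}(P_i,J)\ge hc+1-i .
\]
Both are shown by comparing column by column inside the row-blocks $j=1,\dots,h$ of the construction. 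When two indices lie in different blocks, the differing columns have their topmost nonzero entry in an earlier row, so they contribute at least as much as in the same-block case. I also need $hc+1\le T$. This follows from $hc\le 2t+h-1$ and $2ht+1-(2t+h)=(2t-1)(h-1)\ge0$.

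Next comes the main verification. Take $A,B$ with $\Delta_{RT}^{h,T}(A)\ne\Delta_{RT}^{h,T}(B)$, $u=w_{RT}(A)<v=w_{RT}(B)$ and $\delta=v-u$. Since $d_{RT}$ is additive over the two systematic blocks and $d_{RT}(A,B)\ge\delta$, it suffices to treat $\delta\le 2t$ and show that the redundancy parts are at distance at least $2t+1-\delta$. Because $\delta\le 2t<T$ and a multiple of $T$ lies strictly between $u$ and $v$ (with the right endpoint included), the indices $a=(u+1)\ \mathrm{smod}\ T$ and $b=(v+1)\ \mathrm{smod}\ T$ wrap around, giving $b+(T-a)=\delta$ with $b\le\delta$. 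There are two cases.
\begin{itemize}
\item If $a\ge hc+2$, then $P_a=J$, and the lemma gives $d_{RT}(J,P_b)\ge hc+1-b\ge 2t+1-\delta$.
\item If $a\le hc+1$, then $T-a\ge T-hc-1$ is already large. When $h\ge2$ this forces $\delta\ge T-a>2t$, a contradiction. When $h=1$, $hc+1=2t+1\le T$, and I would use $d(P_a,P_b)\ge a-b$ together with $a\ge T-\delta+b$.
\end{itemize}
This completes the proof that the code is an FCRTC and that $r\le c$. I expect this wrap-around case analysis, and the column-by-column lemma behind it, to be the main obstacle. For it I would first write out the explicit columns of $P_i$ in the cases $i\le c+1$ versus later blocks.

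Finally, suppose $T\le (kh+1)/(2^{(h-1)k}+1)$. Then $|{\rm Im}(\Delta_{RT}^{h,T})|=(kh+1)/T\ge 2^{(h-1)k}+1$, so the second part of Corollary~\ref{Cor1} gives $r\ge c$. Combined with the upper bound, this yields equality.
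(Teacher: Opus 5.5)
Your argument is correct in substance, and it takes a different, noticeably leaner route than the paper. After reducing to $d_{RT}(A_1,A_2)\le 2t$, the paper runs a seven-case table according to which blocks of the construction the two residues fall into. It computes $d_{RT}(P_{w_1+1},P_{w_2+1})$ explicitly in each case, and for $h=1$ it simply appeals to Lemma~8 of Lenz et al.

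Your wrap-around identity $\delta=b+(T-a)$ shows instead that for $h\ge2$ the lighter word must have $a\ge hc+2$. Indeed, if $a\le hc+1$, then $T-a\ge 2ht-hc\ge 2ht-2t(h-1)=2t$, using $hc\le 2t(h-1)$. That inequality is an equality for $h=2$ and follows from $hc\le 2t+h-1$ for $h\ge3$. Hence $\delta\ge 1+(T-a)\ge 2t+1$, a contradiction. Your intermediate claim $T-a>2t$ can fail (e.g.\ $h=2$, $T=4t+1$, $a=2t+1$); the contradiction really comes from $b\ge1$.

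So the lighter word always carries $J$, and everything reduces to the single estimate $d_{RT}(J,P_b)\ge hc+1-b$ for $b\le\delta\le 2t$. This makes visible that the paper's Cases~1, 2, 3, 4 and~6 cannot occur once $d_{RT}(A_1,A_2)\le 2t$. Only Cases~5 and~7 carry content; your one inequality covers both and also handles $h=1$ uniformly. The equality part via Corollary~\ref{Cor1} is the same as in the paper.

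One ingredient, as you stated it, is false and should be dropped. For $h\ge2$ the inequality $d_{RT}(P_i,P_{i'})\ge|i-i'|$ fails across blocks. The matrix $P_{hc+1}$ has every column equal to $e_h$, so $d_{RT}(P_1,P_{hc+1})=c<hc$; for instance, with $h=2$, $t=1$ you get $d_{RT}(P_1,P_3)=1$. Your justification that differing columns have their topmost nonzero entry in an earlier row is therefore wrong. Likewise, $P_1,\dots,P_{hc+1}$ is not a path towards $J$, since $d_{RT}(P_{hc+1},J)=hc=d_{RT}(P_1,J)$.

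Fortunately you invoke this inequality only for $h=1$. There is a single block, and $d_{RT}(P_i,P_{i'})=|i-i'|$ for $i,i'\le 2t+1$. The inequality you actually need for $h\ge2$ is true and easy to check:
\begin{itemize}
\item For $b\le c+1$, $d_{RT}(J,P_b)=(b-1)(h-1)+(c-b+1)h=hc+1-b$.
\item For $b$ in block $j\ge2$, with $s=b-(j-1)c-1\in[c]$, $d_{RT}(J,P_b)=sh+(c-s)(h-1)=(h-1)c+s$. This exceeds $hc+1-b=(h-j+1)c-s$ by $(j-2)c+2s>0$.
\end{itemize}
With the lemma restricted in this way, your proof is complete.
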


    \begin{proof}
    
    If $h=1$, then the RT-weight function over $\mathbb{F}_2^{h\times k}$ coincides with the Hamming weight function. In this case, whenever $T\geq 2t+1$, the encoding function $Enc$ reduces to
\[
Enc:\mathbb{F}_2^{k}\longrightarrow\mathbb{F}_2^{k+2t},
\qquad
A\longmapsto\bigl(A,P_{(w_{RT}(A)+1)\,\mathrm{smod}\,T}\bigr),
\]
where
\[
P_i=
(\underbrace{1\cdots1}_{i-1}\,
\underbrace{0\cdots0}_{2t-i+1}),
\qquad
1\leq i\leq 2t+1,
\]
and
\[
P_i=
(\underbrace{1\cdots1}_{2t}),
\qquad
2t+2\leq i\leq T.
\]
It follows that
\[
\bigl\{\,Enc(A)\mid A\in\mathbb{F}_2^{k}\,\bigr\}
\]
is an FCRTC. The proof is similar to that of Lemma~8 in~\cite{lbwy}.
Assume that $h\geq 2$. Let $A_1,A_2\in\mathbb{F}_2^{h\times k}$ satisfy
\[
\Delta_{RT}^{h,T}(A_1)\neq\Delta_{RT}^{h,T}(A_2).
\]

If
\[
d_{RT}(A_1,A_2)\geq 2t+1,
\]
then
\[
d_{RT}(Enc(A_1),Enc(A_2))
\geq
d_{RT}(A_1,A_2)
\geq
2t+1.
\]

Now suppose that
\[
d_{RT}(A_1,A_2)\leq 2t.
\]
By the triangle inequality,
\[
|w_{RT}(A_1)-w_{RT}(A_2)|
\leq
d_{RT}(A_1,A_2)
\leq
2t.
\]
Since $T\geq 2ht+1$, it follows that
\[
|w_{RT}(A_1)-w_{RT}(A_2)|<T.
\]

Without loss of generality, write
\[
\Delta_{RT}^{h,T}(A_1)=mT+w_1,
\qquad
\Delta_{RT}^{h,T}(A_2)=(m-1)T+w_2,
\]
for some integer $m$. Then
\[
T+w_1-w_2<T,
\]
which implies that
\[
w_1<w_2.
\]
Moreover,
\[
d_{RT}(A_1,A_2)\geq T+w_1-w_2.
\]

By the definition of the shifted modulo operator,
\[
P_{(w_{RT}(A_1)+1)\,\mathrm{smod}\,T}
=
P_{w_1+1},
\qquad
P_{(w_{RT}(A_2)+1)\,\mathrm{smod}\,T}
=
P_{w_2+1}.
\]
We distinguish the following seven cases according to the values of $w_1$ and $w_2$, and show in each case that
\[
d_{RT}(Enc(A_1),Enc(A_2))
\geq
2t+1.
\]
        {
        \begin{table}[!htbp]
        \begin{center}
            \begin{tabular}{|c|c|c|c|}
             \hline
                & \!\multirow{2}{*}{${0\leq w_1\leq  \big\lceil{\frac{2t}{h}}\big\rceil}$}\! & \!${i\big\lceil{\frac{2t}{h}}\big\rceil+1\leq w_1 \leq (i+1)\big\lceil{\frac{2t}{h}}\big\rceil}$,\!& \!\multirow{2}{*}{${h\big\lceil{\frac{2t}{h}}\big\rceil\leq w_1}$}\!\\ & & ${1\leq i\leq h-1}$ & \\
            \hline
                \!\multirow{2}{*}{${0\leq w_2\leq  \big\lceil{\frac{2t}{h}}\big\rceil}$}\! & \multirow{2}{*}{Case 1}  & \multirow{2}{*}{$\ast$}
                & \multirow{2}{*}{$\ast$} \\
                & & & \\
            \hline
            \!{${j\big\lceil{\frac{2t}{h}}\big\rceil+1\leq w_2 \leq (j+1)\big\lceil{\frac{2t}{h}}\big\rceil},$} \!& \multirow{2}{*}{Case 4}& $i=j$: Case 2 & \multirow{2}{*}{*}\\
            \!${1\leq j\leq h-1}$\! &  & {$i<j$: Case 6} & \\
            \hline
            \!\multirow{2}{*}{${h\big\lceil{\frac{2t}{h}}\big\rceil\leq w_2 }$}\! & \multirow{2}{*}{ Case 5} & \multirow{2}{*}{Case 7} &\multirow{2}{*}{Case 3} \\
            & & & \\
            \hline
            \end{tabular}
        \end{center}
        \end{table}}

        {\bf Case 1.} If $0\leq w_1,w_2\leq \big\lceil{\frac{2t}{h}}\big\rceil$, $P_{w_1+1}=\begin{pmatrix}
            \overset{w_1}{\overbrace{1  \cdots 1}} & \overset{\big\lceil{\frac{2t}{h}}\big\rceil-w_1}{\overbrace{0  \cdots 0}}\\
            0 \cdots 0 & 0 \cdots 0\\
            \vdots \ddots \vdots & \vdots \ddots \vdots\\
            0 \cdots 0 & 0 \cdots 0\\
        \end{pmatrix}$, $P_{w_2+1}=\begin{pmatrix}
            \overset{w_2}{\overbrace{1  \cdots 1}} & \overset{\big\lceil{\frac{2t}{h}}\big\rceil-w_2}{\overbrace{0  \cdots 0}}\\
            0 \cdots 0 & 0 \cdots 0\\
            \vdots \ddots \vdots & \vdots \ddots \vdots\\
            0 \cdots 0 & 0 \cdots 0\\
        \end{pmatrix}$. Then, $d_{RT}(P_{w_1+1},P_{w_2+1})=h(w_2-w_1)$. Thus, $$d_{RT}(Enc(A_1),Enc(A_2))=d_{RT}(A_1,A_2)+d_{RT}(P_{w_1+1},P_{w_2+1})\geq T+w_1-w_2+w_2-w_1.$$
        Since $T\geq2ht+1\geq 2t+1$, $d_{RT}(Enc(A_1),Enc(A_2))\geq T\geq 2t+1$.

        {\bf Case 2.} If $i\big\lceil{\frac{2t}{h}}\big\rceil+1\leq w_1,w_2\leq (i+1)\big\lceil{\frac{2t}{h}}\big\rceil$, $1\leq i \leq h-1$, then 
        
        \noindent\begin{minipage}{\linewidth}
        \[
         P_{w_1+1}=\begin{pmatrix}
            \overset{w_1-i\big\lceil{\frac{2t}{h}}\big\rceil}{\overbrace{0  \cdots 0}} & \overset{(i+1)\big\lceil{\frac{2t}{h}}\big\rceil-w_1}{\overbrace{1  \cdots 1}}\\
            \vdots \ddots \vdots & \vdots \ddots \vdots\\
            0 \cdots 0 & 0 \cdots 0\\
           1 \cdots 1 & 0 \cdots 0 \tikzmark{row1}\\
            0 \cdots 0 & 0 \cdots 0\\
            \vdots \ddots \vdots & \vdots \ddots \vdots\\
            0 \cdots 0 & 0 \cdots 0\\
        \end{pmatrix},
        \]
         \begin{tikzpicture}[overlay, remember picture]
            % 先画箭头，从-1.8cm到-0.2cm，不贯穿文字
            \draw[->, black, thick]
                ([xshift=1.7cm, yshift=0.6ex]pic cs:row1) --
                ([xshift=0.9cm, yshift=0.6ex]pic cs:row1);
            
            % 再单独放置文字，在箭头左侧
            \node[
                anchor=west,
                inner sep=0pt,
                outer sep=0pt,
                xshift=1.7cm,
                yshift=0.6ex,
                text=black
            ] at (pic cs:row1) {the $(i+1)$-th row};
        \end{tikzpicture}
        \end{minipage}
         \noindent\begin{minipage}{\linewidth}
         \[P_{w_2+1}=\begin{pmatrix}
            \overset{w_2-i\big\lceil{\frac{2t}{h}}\big\rceil}{\overbrace{0  \cdots 0}} & \overset{(i+1)\big\lceil{\frac{2t}{h}}\big\rceil-w_2}{\overbrace{1  \cdots 1}}\\
            \vdots \ddots \vdots & \vdots \ddots \vdots\\
            0 \cdots 0 & 0 \cdots 0\\
             1 \cdots 1 & 0 \cdots 0 \tikzmark{row2}\\
            0 \cdots 0 & 0 \cdots 0\\
            \vdots \ddots \vdots & \vdots \ddots \vdots\\
            0 \cdots 0 & 0 \cdots 0\\
        \end{pmatrix}.
        \]
        \begin{tikzpicture}[overlay, remember picture]
            % 先画箭头，从-1.8cm到-0.2cm，不贯穿文字
            \draw[->, black, thick]
                ([xshift=1.7cm, yshift=0.6ex]pic cs:row2) --
                ([xshift=0.9cm, yshift=0.6ex]pic cs:row2);
            
            % 再单独放置文字，在箭头左侧
            \node[
                anchor=west,
                inner sep=0pt,
                outer sep=0pt,
                xshift=1.9cm,
                yshift=0.6ex,
                text=black
            ] at (pic cs:row2) {the $(i+1)$-th row};
        \end{tikzpicture}
         \end{minipage}
         Thus, $ d_{RT}(P_{w_1+1},P_{w_2+1})=h(w_2-w_1) $. Similar to Case 1, $$d_{RT}(Enc(A_1),Enc(A_2))\geq T\geq 2t+1.$$

         {\bf Case 3.} If $w_1,w_2\geq h\big\lceil{\frac{2t}{h}}\big\rceil+1$, then $P_{w_1+1}=P_{w_2+1}=\begin{pmatrix}
            \overset{\big\lceil{\frac{2t}{h}}\big\rceil}{\overbrace{1  \cdots 1}}\\
            1\cdots 1\\
            \vdots \ddots\vdots\\
            1 \cdots 1\\
        \end{pmatrix}$. Since $ {w_1 \geq h\big\lceil{\frac{2t}{h}}\big\rceil+1}$ and $\big\lceil{\frac{2t}{h}}\big\rceil\geq \frac{2t}{h}$, we have $w_1\geq 2t+1$. Thus, $$d_{RT}(Enc(A_1),Enc(A_2))=d_{RT}(A_1,A_2)\geq T-w_2+w_1\geq2t+1.$$

        {\bf Case 4.} If $0\leq w_1\leq \big\lceil{\frac{2t}{h}}\big\rceil$, $i\big\lceil{\frac{2t}{h}}\big\rceil+1\leq w_2\leq (i+1)\big\lceil{\frac{2t}{h}}\big\rceil$, $1\leq i\leq {h-1}$, then 
        \noindent\begin{minipage}{\linewidth}
         $P_{w_1+1}=\begin{pmatrix}
            \overset{w_1}{\overbrace{1  \cdots 1}} & \overset{\big\lceil{\frac{2t}{h}}\big\rceil-w_1}{\overbrace{0  \cdots 0}}\\
            0 \cdots 0 & 0 \cdots 0\\
            \vdots \ddots \vdots & \vdots \ddots \vdots\\
            0 \cdots 0 & 0 \cdots 0\\
        \end{pmatrix},\,
         P_{w_2+1}=\begin{pmatrix}
            \overset{w_2-i\big\lceil{\frac{2t}{h}}\big\rceil}{\overbrace{0  \cdots 0}} & \overset{(i+1)\big\lceil{\frac{2t}{h}}\big\rceil-w_2}{\overbrace{1  \cdots 1}}\\
            \vdots \ddots \vdots & \vdots \ddots \vdots\\
            0 \cdots 0 & 0 \cdots 0\\
             1 \cdots 1 & 0 \cdots 0 \tikzmark{row3}\\
            0 \cdots 0 & 0 \cdots 0\\
            \vdots \ddots \vdots & \vdots \ddots \vdots\\
            0 \cdots 0 & 0 \cdots 0\\
        \end{pmatrix}.
        $
        \begin{tikzpicture}[overlay, remember picture]
            % 先画箭头，从-1.8cm到-0.2cm，不贯穿文字
            \draw[->, black, thick]
                ([xshift=1.7cm, yshift=0.6ex]pic cs:row3) --
                ([xshift=0.9cm, yshift=0.6ex]pic cs:row3);
            
            % 再单独放置文字，在箭头左侧
            \node[
                anchor=west,
                inner sep=0pt,
                outer sep=0pt,
                xshift=1.9cm,
                yshift=0.6ex,
                text=black
            ] at (pic cs:row3) {the $(i+1)$-th row};
        \end{tikzpicture}
         \end{minipage}
      
      Assume that
\[
w_1\geq
w_2-i\left\lceil\frac{2t}{h}\right\rceil.
\]
Then
\[
\left\lceil\frac{2t}{h}\right\rceil-w_1
\leq
(i+1)\left\lceil\frac{2t}{h}\right\rceil-w_2,
\]
and hence
\[
\begin{aligned}
d_{RT}(P_{w_1+1},P_{w_2+1})
&=
h\left(w_2-i\left\lceil\frac{2t}{h}\right\rceil\right)
+h\left(\left\lceil\frac{2t}{h}\right\rceil-w_1\right)\\
&=
h(w_2-w_1)
+h(1-i)\left\lceil\frac{2t}{h}\right\rceil.
\end{aligned}
\]
Therefore,
\[
d_{RT}(Enc(A_1),Enc(A_2))
\geq
T+(h-1)(w_2-w_1)
-h(i-1)\left\lceil\frac{2t}{h}\right\rceil.
\]

Using the inequality
\[
w_2-w_1
\geq
(i-1)\left\lceil\frac{2t}{h}\right\rceil+1,
\]
we obtain
\[
d_{RT}(Enc(A_1),Enc(A_2))
\geq
T-(i-1)\left\lceil\frac{2t}{h}\right\rceil+h-1.
\]
Since $1\leq i\leq h-1$, it follows that
\[
d_{RT}(Enc(A_1),Enc(A_2))
\geq
T-(h-2)\left\lceil\frac{2t}{h}\right\rceil+h-1.
\]
Moreover,
\[
\frac{2t}{h}
\leq
\left\lceil\frac{2t}{h}\right\rceil
<
\frac{2t}{h}+1,
\]
which yields
\[
d_{RT}(Enc(A_1),Enc(A_2))
\geq
(T-2t)+\frac{4t}{h}+1.
\]
Finally, since $T\geq 2ht+1$ and $h\geq 2$, we conclude that
\[
d_{RT}(Enc(A_1),Enc(A_2))
\geq
2t+1.
\]

Assume that
\[
w_1<w_2-i\left\lceil\frac{2t}{h}\right\rceil.
\]
Then
\[
\left\lceil\frac{2t}{h}\right\rceil-w_1
>
(i+1)\left\lceil\frac{2t}{h}\right\rceil-w_2,
\]
and hence
\[
\begin{aligned}
d_{RT}(P_{w_1+1},P_{w_2+1})
&=
hw_1
+(h-i)\left(w_2-i\left\lceil\frac{2t}{h}\right\rceil-w_1\right) \\
&\quad
+h\left((i+1)\left\lceil\frac{2t}{h}\right\rceil-w_2\right)\\
&=
i(w_1-w_2)
+\left(h+i^2\right)\left\lceil\frac{2t}{h}\right\rceil.
\end{aligned}
\]
Therefore,
\[
d_{RT}(Enc(A_1),Enc(A_2))
\geq
T+(i+1)(w_1-w_2)
+\left(h+i^2\right)\left\lceil\frac{2t}{h}\right\rceil.
\]

Using the inequality
\[
w_1-w_2
\geq
-(i+1)\left\lceil\frac{2t}{h}\right\rceil,
\]
we obtain
\[
d_{RT}(Enc(A_1),Enc(A_2))
\geq
T+\left(h-2i-1\right)\left\lceil\frac{2t}{h}\right\rceil.
\]
Since $1\leq i\leq h-1$, we have
\[
h-2i-1\geq 1-h,
\]
and hence
\[
d_{RT}(Enc(A_1),Enc(A_2))
\geq
T-(h-1)\left\lceil\frac{2t}{h}\right\rceil.
\]
Moreover,
\[
\left\lceil\frac{2t}{h}\right\rceil
<
\frac{2t}{h}+1,
\]
which yields
\[
d_{RT}(Enc(A_1),Enc(A_2))
>
T-2t-h+\frac{2t}{h}+1.
\]
Since $T\geq 2ht+1$, it follows that
\[
d_{RT}(Enc(A_1),Enc(A_2))
>
2(h-1)t-h+\frac{2t}{h}+2.
\]

We now distinguish three cases.

- If $h=2$, then
\[
d_{RT}(Enc(A_1),Enc(A_2))
\geq
2t+1.
\]

- If $h=3$, then
\[
d_{RT}(Enc(A_1),Enc(A_2))
>
4t+\frac{2t}{3}-1.
\]
Since $t\geq 1$, we obtain
\[
d_{RT}(Enc(A_1),Enc(A_2))
\geq
2t+1.
\]

- Finally, assume that $h\geq 4$. Since $t\geq 1$,
\[
\begin{aligned}
d_{RT}(Enc(A_1),Enc(A_2))
&>
2(h-1)t-h+\frac{2t}{h}+2\\
&\geq
2(h-1)t-ht+\frac{2t}{h}+2\\
&=
(h-2)t+\frac{2t}{h}+2\\
&\geq
2t+1.
\end{aligned}
\]

        {\bf Case 5.} If $0\leq w_1\leq \big\lceil{\frac{2t}{h}}\big\rceil$, $w_2\geq h\big\lceil{\frac{2t}{h}}\big\rceil+1$, then $P_{w_1+1}=\begin{pmatrix}
            \overset{w_1}{\overbrace{1  \cdots 1}} & \overset{\big\lceil{\frac{2t}{h}}\big\rceil-w_1}{\overbrace{0  \cdots 0}}\\
            0 \cdots 0 & 0 \cdots 0\\
            \vdots \ddots \vdots & \vdots \ddots \vdots\\
            0 \cdots 0 & 0 \cdots 0\\
        \end{pmatrix},$ $P_{w_2+1}=\begin{pmatrix}
            \overset{\big\lceil{\frac{2t}{h}}\big\rceil}{\overbrace{1  \cdots 1}}\\
            1\cdots 1\\
            \vdots \ddots\vdots\\
            1 \cdots 1\\
        \end{pmatrix}$. $d_{RT}(P_{w_1+1},P_{w_2+1})=(h-1)w_1+h\big(\big\lceil{\frac{2t}{h}}\big\rceil-w_1\big)=h\big\lceil{\frac{2t}{h}}\big\rceil-w_1$. Thus, $$d_{RT}(Enc(A_1),Enc(A_2))\geq T-w_2+h\bigg\lceil{\frac{2t}{h}}\bigg\rceil\geq 2t+1.$$

        {\bf Case 6.} If $i\big\lceil{\frac{2t}{h}}\big\rceil+1\leq w_1\leq (i+1)\big\lceil{\frac{2t}{h}}\big\rceil$, $j\big\lceil{\frac{2t}{h}}\big\rceil\leq w_2\leq (j+1)\big\lceil{\frac{2t}{h}}\big\rceil$, $1\leq i<h-1$, $i<j\leq h-1$, then 
        
        \noindent\begin{minipage}{\linewidth}
        \[
         P_{w_1+1}=\begin{pmatrix}
            \overset{w_1-i\big\lceil{\frac{2t}{h}}\big\rceil}{\overbrace{0  \cdots 0}} & \overset{(i+1)\big\lceil{\frac{2t}{h}}\big\rceil-w_1}{\overbrace{1  \cdots 1}}\\
            \vdots \ddots \vdots & \vdots \ddots \vdots\\
            0 \cdots 0 & 0 \cdots 0\\
           1 \cdots 1 & 0 \cdots 0 \tikzmark{row5} \\
            0 \cdots 0 & 0 \cdots 0\\
            \vdots \ddots \vdots & \vdots \ddots \vdots\\
            0 \cdots 0 & 0 \cdots 0\\
        \end{pmatrix},
        \]
         \begin{tikzpicture}[overlay, remember picture]
            % 先画箭头，从-1.8cm到-0.2cm，不贯穿文字
            \draw[->, black, thick]
                ([xshift=1.7cm, yshift=0.6ex]pic cs:row5) --
                ([xshift=0.9cm, yshift=0.6ex]pic cs:row5);
            
            % 再单独放置文字，在箭头左侧
            \node[
                anchor=west,
                inner sep=0pt,
                outer sep=0pt,
                xshift=1.9cm,
                yshift=0.6ex,
                text=black
            ] at (pic cs:row5) {the $(i+1)$-th row};
        \end{tikzpicture}
        \end{minipage}
        \noindent\begin{minipage}{\linewidth}
        \[
         P_{w_2+1}=\begin{pmatrix}
            \overset{w_2-j\big\lceil{\frac{2t}{h}}\big\rceil}{\overbrace{0  \cdots 0}} & \overset{(j+1)\big\lceil{\frac{2t}{h}}\big\rceil-w_2}{\overbrace{1  \cdots 1}}\\
            \vdots \ddots \vdots & \vdots \ddots \vdots\\
            0 \cdots 0 & 0 \cdots 0\\
            1 \cdots 1 & 0 \cdots 0 \tikzmark{row6}\\
            0 \cdots 0 & 0 \cdots 0\\
            \vdots \ddots \vdots & \vdots \ddots \vdots\\
            0 \cdots 0 & 0 \cdots 0\\
        \end{pmatrix}.
        \]
         \begin{tikzpicture}[overlay, remember picture]
            % 先画箭头，从-1.8cm到-0.2cm，不贯穿文字
            \draw[->, black, thick]
                ([xshift=1.7cm, yshift=0.6ex]pic cs:row6) --
                ([xshift=0.9cm, yshift=0.6ex]pic cs:row6);
            
            % 再单独放置文字，在箭头左侧
            \node[
                anchor=west,
                inner sep=0pt,
                outer sep=0pt,
                xshift=1.9cm,
                yshift=0.6ex,
                text=black
            ] at (pic cs:row6) {the $(j+1)$-th row};
        \end{tikzpicture}
        \end{minipage}
      
      Assume that
\[
w_1-i\left\lceil\frac{2t}{h}\right\rceil
\geq
w_2-j\left\lceil\frac{2t}{h}\right\rceil.
\]
Then
\[
(i+1)\left\lceil\frac{2t}{h}\right\rceil-w_1
\leq
(j+1)\left\lceil\frac{2t}{h}\right\rceil-w_2,
\]
and hence
\[
\begin{aligned}
d_{RT}(P_{w_1+1},P_{w_2+1})
&=(h-i)\left(w_2-j\left\lceil\frac{2t}{h}\right\rceil\right)\\
&\quad
+h\left((j-i)\left\lceil\frac{2t}{h}\right\rceil+w_1-w_2\right)\\
&=hw_1-iw_2+i(j-h)\left\lceil\frac{2t}{h}\right\rceil.
\end{aligned}
\]
Therefore,
\[
\begin{aligned}
d_{RT}(Enc(A_1),Enc(A_2))
&\geq
T+(h+1)w_1-(i+1)w_2\\
&\quad
+i(j-h)\left\lceil\frac{2t}{h}\right\rceil.
\end{aligned}
\]

Using the inequalities
\[
w_1\geq i\left\lceil\frac{2t}{h}\right\rceil+1,
\qquad
w_2\leq (j+1)\left\lceil\frac{2t}{h}\right\rceil,
\]
we obtain
\[
\begin{aligned}
d_{RT}(Enc(A_1),Enc(A_2))
&\geq
T+(h+1)\left(i\left\lceil\frac{2t}{h}\right\rceil+1\right)\\
&\quad
-(i+1)(j+1)\left\lceil\frac{2t}{h}\right\rceil
+i(j-h)\left\lceil\frac{2t}{h}\right\rceil\\
&=
T-(j+1)\left\lceil\frac{2t}{h}\right\rceil+h+1.
\end{aligned}
\]
Finally, since $i<j\leq h-1$ and
\[
\left\lceil\frac{2t}{h}\right\rceil
<
\frac{2t}{h}+1,
\]
it follows that
\[
d_{RT}(Enc(A_1),Enc(A_2))
\geq
T-2t+1
\geq
2t+1.
\]

   Assume that
\[
w_1-i\left\lceil\frac{2t}{h}\right\rceil
<
w_2-j\left\lceil\frac{2t}{h}\right\rceil.
\]
Then
\[
(i+1)\left\lceil\frac{2t}{h}\right\rceil-w_1
>
(j+1)\left\lceil\frac{2t}{h}\right\rceil-w_2,
\]
and hence
\[
\begin{aligned}
d_{RT}(P_{w_1+1},P_{w_2+1})
&=(h-i)\left(w_1-i\left\lceil\frac{2t}{h}\right\rceil\right)\\
&\quad
+h\left((i-j)\left\lceil\frac{2t}{h}\right\rceil+w_2-w_1\right)\\
&=hw_2-iw_1+\left(i^2-hj\right)
\left\lceil\frac{2t}{h}\right\rceil.
\end{aligned}
\]
Therefore,
\[
\begin{aligned}
d_{RT}(Enc(A_1),Enc(A_2))
&\geq
T+(h-1)w_2-(i-1)w_1\\
&\quad
+\left(i^2-hj\right)
\left\lceil\frac{2t}{h}\right\rceil.
\end{aligned}
\]

Using the inequalities
\[
w_1\leq (i+1)\left\lceil\frac{2t}{h}\right\rceil,
\qquad
w_2\geq j\left\lceil\frac{2t}{h}\right\rceil+1,
\]
we obtain
\[
d_{RT}(Enc(A_1),Enc(A_2))
\geq
T-(j-1)\left\lceil\frac{2t}{h}\right\rceil+h-1.
\]
Finally, since $i<j\leq h-1$, $T\geq 2ht+1$, and $h\geq 2$, it follows that
\[
d_{RT}(Enc(A_1),Enc(A_2))
\geq
T-2t+1+\frac{4t}{h}
\geq
2t+1.
\]

        {\bf Case 7.} If $i\big\lceil{\frac{2t}{h}}\big\rceil+1\leq w_1\leq (i+1)\big\lceil{\frac{2t}{h}}\big\rceil$, $1\leq i\leq h-1$, $w_2\geq h\big\lceil{\frac{2t}{h}}\big\rceil+1$, then 

        \noindent\begin{minipage}{\linewidth}
        $ P_{w_2+1}=\begin{pmatrix}
            \overset{\big\lceil{\frac{2t}{h}}\big\rceil}{\overbrace{1  \cdots 1}}\\
            1\cdots 1\\
            \vdots \ddots\vdots\\
            1 \cdots 1\\
        \end{pmatrix},\qquad
         P_{w_1+1}=\begin{pmatrix}
            \overset{w_1-i\big\lceil{\frac{2t}{h}}\big\rceil}{\overbrace{0  \cdots 0}} & \overset{(i+1)\big\lceil{\frac{2t}{h}}\big\rceil-w_1}{\overbrace{1  \cdots 1}}\\
            \vdots \ddots \vdots & \vdots \ddots \vdots\\
            0 \cdots 0 & 0 \cdots 0\\
           1 \cdots 1 & 0 \cdots 0 \tikzmark{row7}\\
            0 \cdots 0 & 0 \cdots 0\\
            \vdots \ddots \vdots & \vdots \ddots \vdots\\
            0 \cdots 0 & 0 \cdots 0\\
        \end{pmatrix}.
        $
         \begin{tikzpicture}[overlay, remember picture]
            % 先画箭头，从-1.8cm到-0.2cm，不贯穿文字
            \draw[->, black, thick]
                ([xshift=1.7cm, yshift=0.6ex]pic cs:row7) --
                ([xshift=0.9cm, yshift=0.6ex]pic cs:row7);
            
            % 再单独放置文字，在箭头左侧
            \node[
                anchor=west,
                inner sep=0pt,
                outer sep=0pt,
                xshift=1.9cm,
                yshift=0.6ex,
                text=black
            ] at (pic cs:row7) {the $(i+1)$-th row};
        \end{tikzpicture}
        \end{minipage}
        $d_{RT}(P_{w_1+1},P_{w_2+1})=w_1+(h-i-1)\big\lceil{\frac{2t}{h}}\big\rceil$. Since $d_{RT}(A_1,A_2)\geq T+w_1-w_2$,
        $$d_{RT}(Enc(A_1),Enc(A_2))\geq T-w_2+2w_1+(h-i-1)\bigg\lceil{\frac{2t}{h}}\bigg\rceil.$$

 By the assumption
\[
w_1\geq i\left\lceil\frac{2t}{h}\right\rceil+1,
\]
we obtain
\[
d_{RT}(Enc(A_1),Enc(A_2))
\geq
T-w_2+(h+i-1)\left\lceil\frac{2t}{h}\right\rceil+2.
\]
Since $i\geq 1$, we have $h+i-1\geq h$. Therefore,
\[
d_{RT}(Enc(A_1),Enc(A_2))
\geq
h\left\lceil\frac{2t}{h}\right\rceil+1
\geq
2t+1.
\]

Hence, by Definition~\ref{Def FCRTCs}, if $T\geq 2ht+1$, the set
\[
\{\,Enc(A)\mid A\in\mathbb{F}_2^{h\times k}\,\}
\]
forms an FCRTC for the RT-weight distribution function
$\Delta_{RT}^{h,T}$ over $\mathbb{F}_2^{h\times k}$. Consequently,
\[
r_{RT}^{\Delta_{RT}^{h,T}}(h,k,t)
\leq
\left\lceil\frac{2t}{h}\right\rceil.
\]

Furthermore, if
\[
T\leq \frac{kh+1}{2^{(h-1)k}+1},
\]
then
\[
\big|{\rm Im}(\Delta_{RT}^{h,T})\big|
=
\frac{kh+1}{T}
\geq
2^{(h-1)k}+1.
\]
It follows from Corollary~\ref{Cor1} that
\[
r_{RT}^{\Delta_{RT}^{h,T}}(h,k,t)
\geq
\left\lceil\frac{2t}{h}\right\rceil.
\]
Combining this lower bound with the above upper bound yields
\[
r_{RT}^{\Delta_{RT}^{h,T}}(h,k,t)
=
\left\lceil\frac{2t}{h}\right\rceil,
\]
provided that
\[
2ht+1\leq T\leq
\frac{kh+1}{2^{(h-1)k}+1}.
\]
    \end{proof}

\begin{Remark}
The proof of Lemma~8 in~\cite{lbwy} is based on Corollary~1 of~\cite{lbwy}. Therefore, the additional assumption
$T\leq \frac{k+1}{2}$ is implicitly required for the argument to be valid. With this assumption, the proof given in~\cite{lbwy} applies without modification. Moreover, when $h=1$, the Rosenbloom--Tsfasman metric reduces to the Hamming metric. Hence, Theorem~\ref{Con FCRTC for RT-weight distribution} recovers exactly the result of Lemma~8 in~\cite{lbwy}.
\end{Remark}

\section{FCRTCs for RT-Locally-Two-Valued Binary Functions}

    In this section, we introduce a class of functions called RT-locally-two-valued binary functions. For RT-locally-two-valued binary functions, we mainly study FCRTCs for these functions. In some cases, we obtain the optimal redundancy of these FCRTCs.

\begin{Definition}
Let $h$ and $k$ be positive integers, and let
\[
f:\mathbb{F}_2^{h\times k}\rightarrow{\rm Im}(f)
\]
be a function. For
\[
A\in\mathbb{F}_2^{h\times k}
\]
and
\[
\rho\ge0,
\]
the \emph{RT-function ball} centered at $A$ with radius $\rho$ is defined by
\[
B_f^{RT}(h,A,\rho)
=
\left\{
f(B)
\;\middle|\;
B\in\mathbb{F}_2^{h\times k},
\;
d_{RT}(A,B)\le\rho
\right\}.
\]
\end{Definition}

Based on the notion of RT-function balls introduced above, we are now in a position to define RT-locally-two-valued binary functions. Intuitively, these are functions whose values exhibit only limited local variation with respect to the RT metric: within every RT ball of a prescribed radius, the function takes at most two distinct values.

  \begin{Definition}
Let
$
f:\mathbb{F}_2^{h\times k}\rightarrow{\rm Im}(f)
$
be an arbitrary function, and let
$
\rho
$
be a nonnegative real number. If
$
\left|B_f^{RT}(h,A,\rho)\right|\le2,
\,
\forall\,A\in\mathbb{F}_2^{h\times k},
$
then $f$ is called a \emph{$\rho$-RT-locally-two-valued binary function}.
\end{Definition}

 The class of $\rho$-RT-locally-two-valued binary functions naturally includes all two-valued binary functions. Indeed, since such functions take at most two distinct values over their entire domain, they satisfy the defining condition for every nonnegative real number $\rho$.

Another important example is provided by the RT-weight distribution function. If
\[
T\geq 4t+1,
\]
then
$
\Delta_{RT}^{h,T}
$
is a
$
2t\text{-RT-locally-two-valued binary function}.
$

\begin{Theorem}\label{Th:FCRTCs for locally binary function}
Let $h$ and $k$ be positive integers, and let
\[
f:\mathbb{F}_2^{h\times k}\longrightarrow{\rm Im}(f)
\]
be a $2t$-RT-locally-two-valued binary function. If
\[
|{\rm Im}(f)|
\ge
2^{(h-1)k}+1,
\]
then
\[
r_{RT}^{f}(h,k,t)
=
\left\lceil\frac{2t}{h}\right\rceil.
\]
\end{Theorem}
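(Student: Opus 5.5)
The statement will follow from a matching lower and upper bound. The lower bound is already available: since $|{\rm Im}(f)|\ge 2^{(h-1)k}+1$, the second assertion of Corollary~\ref{Cor1} gives directly
\[
r_{RT}^{f}(h,k,t)\ge\left\lceil\frac{2t}{h}\right\rceil .
\]
It therefore remains to construct an FCRTC with exactly $\lceil 2t/h\rceil$ redundancy columns, using only the $2t$-RT-locally-two-valued property.

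For the upper bound I would use a two-colouring argument, as in Lemma~5 of~\cite{lbwy}. Define a graph $G$ on the vertex set ${\rm Im}(f)$, with an edge between $z_1\neq z_2$ whenever $d_{RT}^{f}(z_1,z_2)\le 2t$. The key claim is that $G$ admits a proper 2-colouring $c:{\rm Im}(f)\to\{0,1\}$. Given such a colouring, set $P_0=0_{h\times\lceil 2t/h\rceil}$ and let $P_1$ be the all-ones $h\times\lceil 2t/h\rceil$ matrix, so that $d_{RT}(P_0,P_1)=h\lceil 2t/h\rceil\ge 2t$. Encode $A\mapsto (A,P_{c(f(A))})$.

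The verification then splits into two cases for $f(A)\neq f(B)$.
\begin{itemize}
\item If $d_{RT}(A,B)\ge 2t+1$, the required distance is already met by the information part.
\item Otherwise $d_{RT}(A,B)\le 2t$, so $f(A)$ and $f(B)$ are adjacent in $G$ and receive different colours. Hence
\[
d_{RT}(Enc(A),Enc(B))\ge 1+2t .
\]
\end{itemize}
Here $d_{RT}(A,B)\ge 1$ because $A\neq B$, and the redundancy parts contribute at least $2t$.

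The main obstacle is justifying the 2-colouring. Local two-valuedness only says that every RT ball of radius $2t$ meets at most two function values, so each value is adjacent only to values realized near it. This does not obviously exclude odd cycles in $G$, for instance $z_1,z_2,z_3$ realized pairwise close by different matrices.

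My first attempt is to colour matrices rather than values, requiring $c(A)\neq c(B)$ only when $f(A)\neq f(B)$ and $d_{RT}(A,B)\le 2t$. This also suffices, since the encoding $Enc(A)=(A,P_{c(A)})$ need not depend only on $f(A)$. I would then try to show that this conflict graph is bipartite, using that every radius-$2t$ ball sees only two values. If that argument fails, I would fall back on assuming (as in~\cite{lbwy}) that the value graph $G$ is bipartite, or restrict to the setting where the cited Hamming-metric proof carries over verbatim. Combining the resulting upper bound with Corollary~\ref{Cor1} gives equality.
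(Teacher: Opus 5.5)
Your lower bound (the second assertion of Corollary~\ref{Cor1}) is correct. Your choice of two redundancy patterns at RT distance $h\lceil 2t/h\rceil\ge 2t$ is also fine. The gap is that the upper bound is never established: everything hinges on the 2-colouring, you leave it open, and your fallback of assuming $G$ bipartite adds a hypothesis that is not in the statement.

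Moreover, the value-graph route genuinely fails. Take $h=1$, $T\ge 4t+1$, $k\ge 3T$ and $f(A)=\lfloor w_{RT}(A)/T\rfloor \bmod 3$. This $f$ satisfies all the hypotheses: a radius-$2t$ ball only sees RT weights in a window of $4t+1$ consecutive integers, hence at most two consecutive blocks of length $T$, hence at most two values. Yet the pairs of values $\{0,1\}$, $\{1,2\}$, $\{2,0\}$ are realized at RT distance $1$ across the weight thresholds $T$, $2T$, $3T$, so $G$ contains a triangle. Any encoding $A\mapsto(A,P_{c(f(A))})$ with two patterns therefore fails. The redundancy must depend on $A$ itself, as in your second attempt.

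What is missing is an explicit proper colouring of your conflict graph on matrices. The paper supplies it by a purely local rule. Fix any total order on ${\rm Im}(f)$, and set $c(A)=1$ if $f(A)=\max B_f^{RT}(h,A,2t)$ and $c(A)=0$ otherwise. The paper realizes this as $P(A)$ consisting of $\lceil 2t/h\rceil$ copies of either a column of RT weight $h$ or the zero column.

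To see that this is a proper colouring, suppose $f(A)\neq f(B)$ and $d_{RT}(A,B)\le 2t$. Then both $B_f^{RT}(h,A,2t)$ and $B_f^{RT}(h,B,2t)$ contain $f(A)$ and $f(B)$. By local two-valuedness, both sets equal $\{f(A),f(B)\}$. They therefore have the same maximum, which is exactly one of $f(A),f(B)$, so $c(A)\neq c(B)$. With this colouring your case analysis goes through unchanged and gives $r_{RT}^{f}(h,k,t)\le\lceil 2t/h\rceil$. Combined with Corollary~\ref{Cor1}, this yields the equality.
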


    \begin{proof}
        Let $f$ be a $2t$-RT-locally-two-valued binary function over $\mathbb{F}_2^{h\times k}$. Define 
        \begin{equation*}
            {\boldsymbol{\alpha}}_{2t}(A)=\left\{\begin{aligned}
                &(\,\overset{\big\lceil{\frac{2t}{h}}\big\rceil}{\overbrace{1\,0\,\cdots\,0}}\,)^{\intercal}, ~\text{if } f(A)=\max{B_f^{RT}(h,A,2t)},\\
                &(\,\overset{\big\lceil{\frac{2t}{h}}\big\rceil}{\overbrace{0\,0\,\cdots\,0}}\,)^{\intercal}, ~\text{otherwise}.
            \end{aligned}
            \right.
        \end{equation*}
        Let the matrix $P(A)=(\,\overset{\big\lceil{\frac{2t}{h}}\big\rceil}{\overbrace{{\boldsymbol{\alpha}}_{2t}(A)~{\boldsymbol{\alpha}}_{2t}(A)~\cdots~{\boldsymbol{\alpha}}_{2t}(A)}}\,)$. 
        
        Define the encoding function
\[
Enc:\mathbb{F}_2^{h\times k}\longrightarrow
\mathbb{F}_2^{h\times \left(k+\left\lceil\frac{2t}{h}\right\rceil\right)},
\qquad
A\longmapsto (A,P(A)),
\]
for every $A\in\mathbb{F}_2^{h\times k}$.

Let $A_1,A_2\in\mathbb{F}_2^{h\times k}$ satisfy
$f(A_1)\neq f(A_2)$.

If $d_{RT}(A_1,A_2)\geq 2t+1$, then
\[
d_{RT}(Enc(A_1),Enc(A_2))
\geq d_{RT}(A_1,A_2)
\geq 2t+1.
\]

Now assume that $d_{RT}(A_1,A_2)\leq 2t$. Then
\[
f(A_1),f(A_2)\in
B_f^{RT}(h,A_1,2t)\cap
B_f^{RT}(h,A_2,2t).
\]
Since $f$ is a $2t$-RT-locally-two-valued binary function, it follows that
\[
B_f^{RT}(h,A_1,2t)=
B_f^{RT}(h,A_2,2t).
\]
Because $f(A_1)\neq f(A_2)$, we must have
\[
{\boldsymbol{\alpha}}_{2t}(A_1)\neq
{\boldsymbol{\alpha}}_{2t}(A_2).
\]
By Definition~\ref{def RT-distance},
\[
d_{RT}\bigl({\boldsymbol{\alpha}}_{2t}(A_1),
{\boldsymbol{\alpha}}_{2t}(A_2)\bigr)=h.
\]
Therefore,
\[
\begin{aligned}
d_{RT}(Enc(A_1),Enc(A_2))
&=d_{RT}(A_1,A_2)
+d_{RT}(P(A_1),P(A_2))\\
&\geq 1+h\left\lceil\frac{2t}{h}\right\rceil
\geq 2t+1.
\end{aligned}
\]
    
    By the definition of FCRTCs, the set $\{\,\mathrm{Enc}(A)\mid A\in\mathbb{F}_2^{h\times k}\,\}
$forms an FCRTC for the function $f$. Consequently,
$r_{RT}^f(h,k,t)\leq\left\lceil\frac{2t}{h}\right\rceil.
$
On the other hand, Corollary~\ref{Cor1} implies that if
$
\big|{\rm Im}(f)\big|\geq 2^{(h-1)k}+1,
$
then
$
r_{RT}^f(h,k,t)\geq\left\lceil\frac{2t}{h}\right\rceil.
$
Combining these upper and lower bounds, we conclude that whenever
$
\big|{\rm Im}(f)\big|\geq 2^{(h-1)k}+1,
$
it holds that
$
r_{RT}^f(h,k,t)=\left\lceil\frac{2t}{h}\right\rceil.
$
        
    \end{proof}

    \begin{Remark}
        If $h=1$, the result of Theorem \ref{Th:FCRTCs for locally binary function} is the same as Lemma 5 in {\rm \cite{lbwy}}.
    \end{Remark}

\section{Conclusion and Future Avenues} 
In this article,  we introduced function-correcting-Rosenbloom--Tsfasman codes (FCRTCs) over $\mathbb{F}_2^{h\times k}$, thereby extending the framework of function-correcting codes from the Hamming metric to the more general Rosenbloom--Tsfasman (RT) metric. Since the RT metric coincides with the Hamming metric when $h=1$, the proposed framework naturally generalizes the function-correcting codes introduced in~\cite{lbwy}. Similar to classical function-correcting codes, FCRTCs reduce the redundancy required to protect function values compared with conventional error-correcting codes. A central contribution of this work is establishing a fundamental connection between the optimal redundancy of FCRTCs and the minimum column length of irregular RT-distance codes. By introducing RT-distance matrices and RT-function-distance matrices, we transformed the problem of determining the optimal redundancy of FCRTCs into the study of irregular RT-distance codes. This approach enabled us to derive general upper and lower bounds on the optimal redundancy for arbitrary functions, together with corresponding bounds on the smallest column length of irregular RT-distance codes associated with symmetric matrices. By exploiting the algebraic structure of specific classes of functions, we obtained sharper results for several important families, including RT-weight functions, RT-weight distribution functions, and RT-locally-two-valued binary functions. In particular, we proposed explicit constructions of FCRTCs for RT-weight distribution functions and showed that these constructions achieve the optimal redundancy for a broad range of parameters. Moreover, when $h=1$, all our results reduce to their counterparts in~\cite{lbwy}, demonstrating that the present work provides a genuine extension of the existing theory from the Hamming metric to the RT metric. 

Several interesting research directions naturally arise from this work. First, it would be worthwhile to investigate FCRTCs across larger classes of functions and determine tighter redundancy bounds for these functions. Second, developing new constructions that attain or improve the theoretical bounds established in this paper remains an important problem. Since FCRTCs employ systematic encoding, their encoding and decoding procedures are inherently determined by the underlying function. Consequently, the design of efficient encoding and decoding algorithms achieving optimal redundancy constitutes another challenging and promising direction for future research. Finally, it would also be interesting to investigate extensions of the proposed framework to other metrics, larger finite fields, and more general algebraic structures, as well as potential applications to communication systems over parallel channels.

\bigskip

\noindent\textbf{Acknowledgement}. This work was supported by the National Natural Science Foundation of China (NSFC) under Grant Nos. 12441102 and 12271199, and by the Fundamental Research Funds for the Central Universities (Grant No. CCNU25JCPT031). S.~Mesnager sincerely thanks Hongwei Liu for many fruitful discussions, for providing an inspiring scientific environment at Central China Normal University in Wuhan, and for the warm hospitality she received during her visit at the end of 2025.

\end{document}